\newtheorem{example}{Example}
\newtheorem{theorem}{Theorem}
\newtheorem{extended-theorem}{Extended Theorem}
\newtheorem{lemma}{Lemma}
\newtheorem{remark}{Remark}
\DeclareMathOperator*{\argmin}{argmin}
\title{Out-of-distribution generalization under random, dense
distributional shifts}
\author{Yujin Jeong and Dominik Rothenh\"ausler}
\begin{document}

\maketitle

\begin{abstract}
    Many existing approaches for estimating parameters in settings with distributional shifts operate under an invariance assumption. For example, under covariate shift, it is assumed that $p(y|x)$ remains invariant. We refer to such distribution shifts as sparse, since they may be substantial but affect only a part of the data generating system. In contrast, in various real-world settings, shifts might be dense. More specifically, these dense distributional shifts may arise through numerous small and random changes in the population and environment. First, we discuss empirical evidence for such random dense distributional shifts. Then, we develop tools to infer parameters and make predictions for partially observed, shifted distributions. Finally, we apply the framework to several real-world datasets and discuss diagnostics to evaluate the fit of the distributional uncertainty model.
\end{abstract}

\section{Introduction}

Distribution shift is a persistent challenge in statistics and machine learning. For example, we might be interested in understanding the determinants of positive outcomes in early childhood education. However, the relationships among learning outcomes and other variables can shift over time and across locations. This makes it challenging to gain actionable knowledge that can be used in different situations. As a result, there has been a surge of interest in robust and generalizable machine learning. Existing approaches can be classified as invariance-based approaches and adversarial approaches.  
 
Invariance-based methods assume that the shifts only affect a sub-space of the data distribution \citep{pan2010transfer,baktashmotlagh2013unsupervised}. For example, under covariate shift \citep{quinonero2009dataset}, the distribution of the covariates changes while the conditional distribution of the outcomes given the covariates stays invariant. In causal representation learning \citep{scholkopf2021toward}, the goal is to disentangle the distribution into independent causal factors; based on the assumption that spurious associations can be separated from invariant associations. Both approaches assume the shift affects only a sub-space of the data. We call such shifts \emph{sparse} distribution shifts.

Adversarial methods in robust machine learning consider \emph{worst-case} perturbations. The idea is that some adversary can perturb the data either at the training \citep{huber81} or deployment stage \citep{szegedy-2014,duchi2021learning}; and the goal is to learn a model that is robust to such deviations. While developing robust procedures is important, worst-case perturbations might be too conservative in many real-world applications. 

Motivated by an empirical observation, in this paper we consider distributional shifts due to \emph{random} and \emph{dense} perturbations. As an example, consider replication studies. In replication studies, different research teams try to replicate results by following the same research protocol. Therefore, any distribution shift between the data may arise from the superposition of many unintended, subtle errors that affect every part of the distribution, which might be modeled as random. We call such shifts \textit{dense} distributional shifts. The random, dense distributional shift model has recently been used to quantify distributional uncertainty in parameter estimation problems \citep{jeong2023calibrated, rothenhausler2023distributionally} and has shown success in modelling real-world temporal shifts in refugee placement \citep{bansak2023random}.  We will develop tools to measure the similarity between randomly perturbed distributions, infer parameters and predict outcomes for partially observed and shifted distributions.

\subsection{Summary of contributions}

In this subsection, we preview and summarize our contributions.

First, we discuss an intriguing phenomenon on a real-world dataset, in which means of arbitrary functions seem to be correlated across multiple distributions. To the best of our knowledge, existing distribution shift models do not imply this phenomenon.

To address this gap, we introduce a random distribution shift model for multiple datasets, where likelihood ratios exhibit random fluctuations. This builds on the perturbation model introduced in \cite{jeong2023calibrated}, which considers a single perturbed distribution. We generalize the model to handle multiple perturbed distributions that are possibly correlated. We then demonstrate the extended random perturbation model can explain the empirical distribution shift phenomenon.

Next, we consider the following domain adaptation problem. Our goal is to predict outcomes \( Y \) based on covariates \( X \). Specifically, we seek to estimate  
$\theta^0 = \arg \min_\theta \mathbb{E}^0[\mathcal{L}(\theta,X,Y)]$ 
for some loss function \( \mathcal{L}(\theta,X,Y) \), where \( \mathbb{E}^0 \) denotes the expectation under the unknown target distribution. While we have access to complete datasets \( (X,Y) \) from multiple training distributions, the target dataset contains only covariates \( X \) without corresponding outcomes \( Y \).  
We adopt a weighted Empirical Risk Minimization (ERM) approach, where the parameters $\theta^0$ are estimated as
\begin{equation*}
    \hat{\theta}^0 = \sum_{k=1}^K \beta_k\hat{\mathbb{E}}^{k}[\mathcal{L}(\theta, X, Y)],
\end{equation*}
where $\hat{\mathbb{E}}^k$ denotes the empirical mean computed on the $k$-th training dataset. 

Under the random distribution shift model, we derive optimal weights for weighted ERM and expressions for the corresponding out-of-distribution error. Based on this expression, we derive prediction intervals for parameters of the target distribution.

\begin{figure}[t!]
    \centering
    \includegraphics[scale = 0.5]{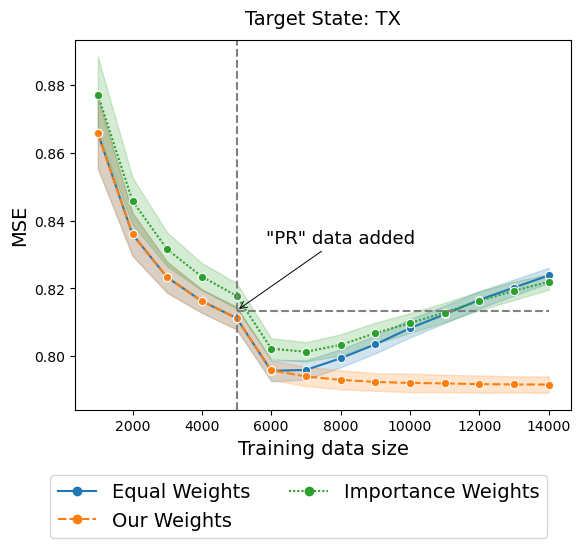}
    \includegraphics[scale = 0.5]{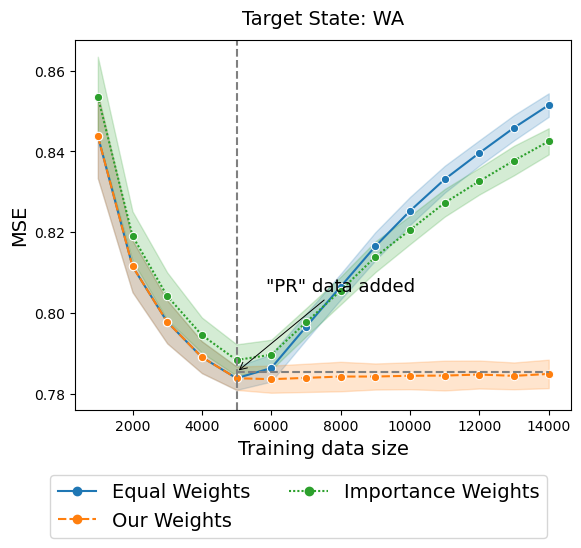}
    \caption{Test MSE results of the XGBoost when the training data is initially sourced from CA and then sourced from PR. The dashed vertical line indicates the point when PR data started to be added. The blue line is when samples are equally weighted, and the green line is when samples are weighted based on importance weights. The orange line is when samples receive distribution-specific weights using our proposed method.}
    \label{fig:ACS}
\end{figure}

The effectiveness of our approach is illustrated in Figure~\ref{fig:ACS}, which shows the out-of-distribution error for income prediction task (see Section~\ref{sec:acs} for details) when the target state is either Texas (TX) or Washington (WA) state, and training datasets are from California (CA) and Puerto Rico (PR). We simulate a scenario where CA data is initially available, followed by additional data from PR. 
We optimize the weighted empirical risk under three different weighting schemes: (1) equal weighting of all samples, (2) sample-specific importance weighting under the assumption of covariate shifts, and (3) distribution-specific weighting based on our approach. Compared to equal weighting and importance weighting methods, our approach demonstrates remarkable robustness, maintaining consistently low out-of-distribution error even after including the PR dataset by optimally weighting two data sources.

The remainder of paper is outlined as follows. In Section~\ref{sec:distributional-uncertainty}, we present empirical evidence for random and dense distributional shifts and introduce the random distributional perturbation model.
In Section~\ref{sec:distribution-generalization}, we establish the foundations for domain adaptation under random and dense distributional shifts. In particular, we derive out-of-distribution risk for weighted empirical risk minimization and conduct statistical inference for parameters of the target distribution. In Section~\ref{sec:acs} and Section~\ref{sec:gtex}, we apply our framework to several real-world datasets and demonstrate the robustness of our method. We conclude in Section~\ref{sec:discussion}.

\subsection{Related work}

This work addresses domain adaptation problems under a specific class of random distributional shifts. Domain adaptation methods generally fall into two main categories \citep{pan2010transfer}. The first type involves re-weighing, where training samples are assigned weights that take the distributional change into account \citep{long2014transfer, li2016prediction}. For example, under covariate shift, training samples can be re-weighted via importance sampling \citep{gretton2009covariate, shimodaira2000improving, sugiyama2008direct}. Another type aims to learn invariant representations of features \citep{argyriou2007, baktashmotlagh2013unsupervised}.
Under dense distributional shifts, as we will see in Section~\ref{sec:distributional-uncertainty}, instance-specific weights can be unstable or even be undefined due to overlap violations. Furthermore, under dense distributional shifts there might be no invariant representation. Thus, from a classical domain adaptation perspective, dense distribution shifts are a challenging setting. 

The proposed method approximates the target distribution as a weighted combination of source distributions. Therefore, the proposed method shares methodological similarities with synthetic controls, which search for a linear combination of untreated units to represent treated units \citep{abadie2003basque}. 
Synthetic controls are applied to panel data and often assume a linear factor model for a continuous outcome of interest \citep{doudchenko2016balancing, abadie2007synthetic, bai2009panel, xu2017generalized, athey2021matrix}. Our framework can be seen as a justification for synthetic control methods under random distributional shifts. In addition, our procedure can be applied  to model distribution shifts with or without time structure for any type of data (discrete, continuous, ordinal).  Furthermore, our framework allows for straightforward generalizations to empirical risk minimization, which includes many modern machine learning tools.

The approach of weighting a combination of source distributions to approximate a target distribution also appears in the distributionally robust optimization (DRO) framework \citep{xiong2023distributionally, zhan2024transfer}.  While there is some similarity in the prediction procedures, the random distribution shift model enables diagnostics and inference that are markedly different from a worst-case approach. One can think about the random shift model as a potential justification for a very particular type of distribution weighting, with additional inferential consequences.

Our procedure is also loosely related to meta-analysis and hierarchical modeling. Meta-analysis is a statistical procedure for combining data from multiple studies \citep{ higgins2009REmetaanalysis}. Traditional meta-analysis relies on the assumption that the effect size estimates from different studies are independent. However, in real-world applications, this independence assumption often does not hold true. 
There are different strategies to handle dependencies. These include averaging the dependent effect sizes within studies into a single effect \citep{wood2008averagingeffects}, incorporating dependencies into models through robust variance estimation \citep{hedges2010robust}, or through multilevel meta-analysis where they assume a cluster structure and model dependence within and between clusters \citep{cheung2014modeling}. In meta-regression, the effect sizes are expressed as linear functions of study characteristics, and they are independent conditional on these study characteristics \citep{borenstein2009book, hartung2008book}. Bayesian hierarchical models with prior on model parameters can also be used to model dependency  \citep{higgins2009REmetaanalysis, lunn2013fullybayesian}. 
 Compared to these procedures, our approach allows for correlations between studies and can be applied even when there is no hierarchical structure and no summary study characteristics are available. Furthermore, our approach is frequentist, so it does not require choosing a prior distribution.

In this paper, we model multiple correlated distributions using random distributional perturbation model introduced in \cite{jeong2023calibrated}, where dense distributional shifts emerge as the superposition of numerous small random changes.  In \cite{jeong2023calibrated}, a single perturbed distribution is considered. In our paper, we extend this model to address multiple perturbed distributions that are possibly correlated.

\section{Distributional Uncertainty}\label{sec:distributional-uncertainty}

Due to a superposition of many random, small changes, a data scientist might not draw samples from the target distribution $\mathbb{P}^0$ but from several randomly perturbed distributions $\mathbb{P}^{1}, \dots, \mathbb{P}^{K}$. More specifically, we assume that a data scientist has $K$ datasets, with the $k$-th dataset being an i.i.d.\ sample from a perturbed distribution $\mathbb{P}^k$. Intuitively speaking, some of these distributions might be similar to each other, while others might be very different. 
For instance, one replication study might closely resemble another due to collaboration among investigators. Additionally, a dataset obtained from California might be more similar to one from Washington than to one from Wyoming. 

We illustrate the scenario where multiple datasets exhibit distributional correlations 
using the GTEx\footnote{The Genotype-Tissue Expression (GTEx) Project was supported by the Common Fund of the Office of the Director of the National Institutes of Health, and by NCI, NHGRI, NHLBI, NIDA, NIMH, and NINDS. The dataset used for the analyses described in this manuscript is version 6 and can be downloaded in the GTEx Portal: \url{www.gtexportal.org}.} data. The GTEx V6 data offers pre-processed RNA-seq gene-expression levels collected from 450 donors across 44 tissues. 
Treating each tissue as an individual study, we have 44 datasets. Each dataset from tissue $k$ is considered as an independent sample from a perturbed distribution $\mathbb{P}^k$, where $\mathbb{P}^{\bullet}$ represents a joint distribution of all gene-expression levels. Some of these datasets are likely to be correlated due to factors such as an overlap of donors and the proximity between tissues. Empirical evidence of this correlation is presented in the left side of Figure \ref{fig:gtex_pairs_plot}, where each row and column corresponds to a tissue, and each dot represents a standardized covariance between gene expression levels of a randomly selected gene-pair, among the observations in the tissue. We randomly sample 1000 gene-pairs. We can see that some tissues exhibit a clear linear relationship. In the following, we will introduce a distribution shift model that implies this linear relationship.

\begin{figure}[t!]
    \centering
    \begin{minipage}{.55\textwidth}
      \centering
      \includegraphics[scale = 0.65]
    {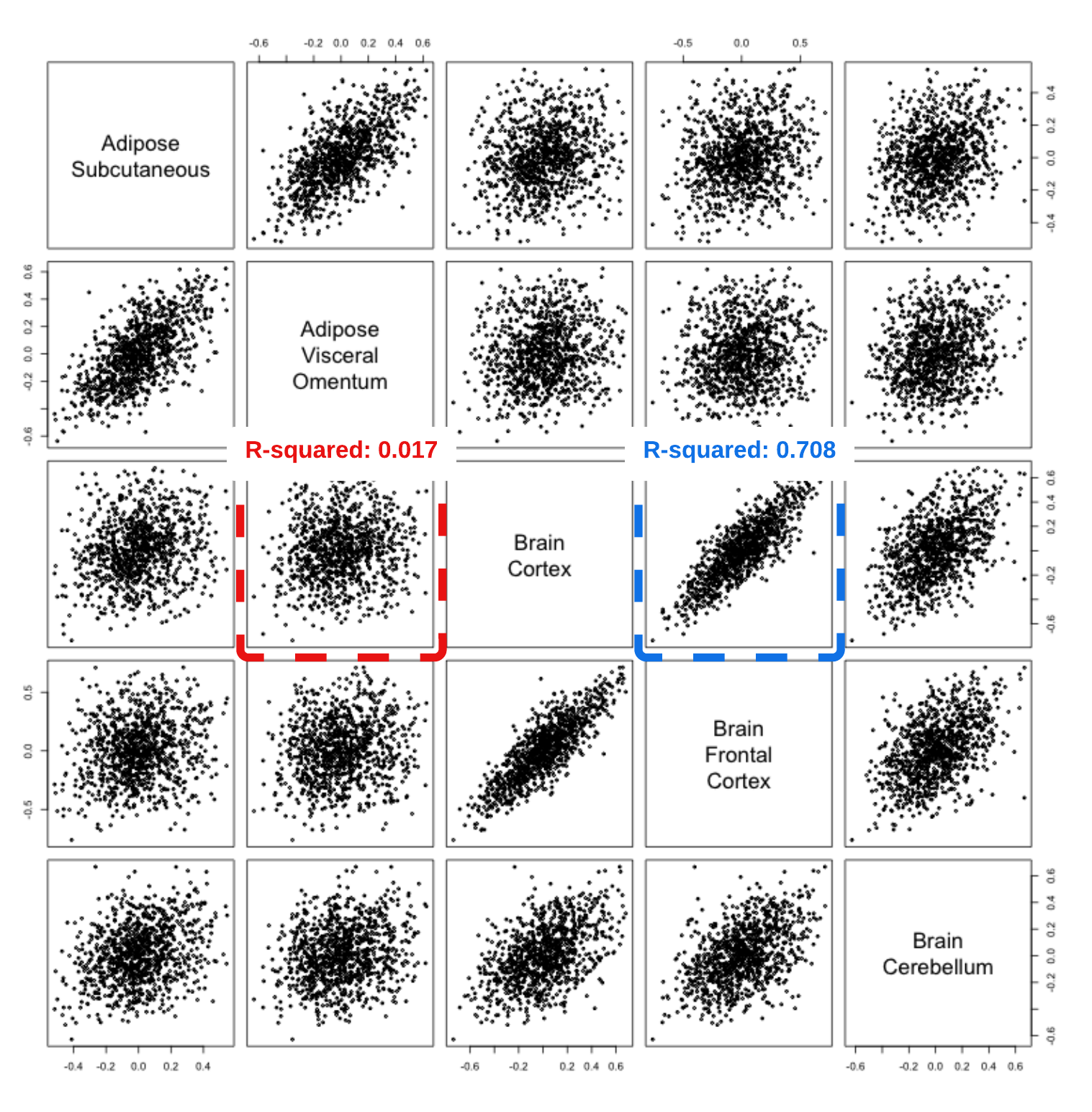}
    \end{minipage}
    \begin{minipage}{.4\textwidth}
      \centering
     \includegraphics[scale = 0.45]{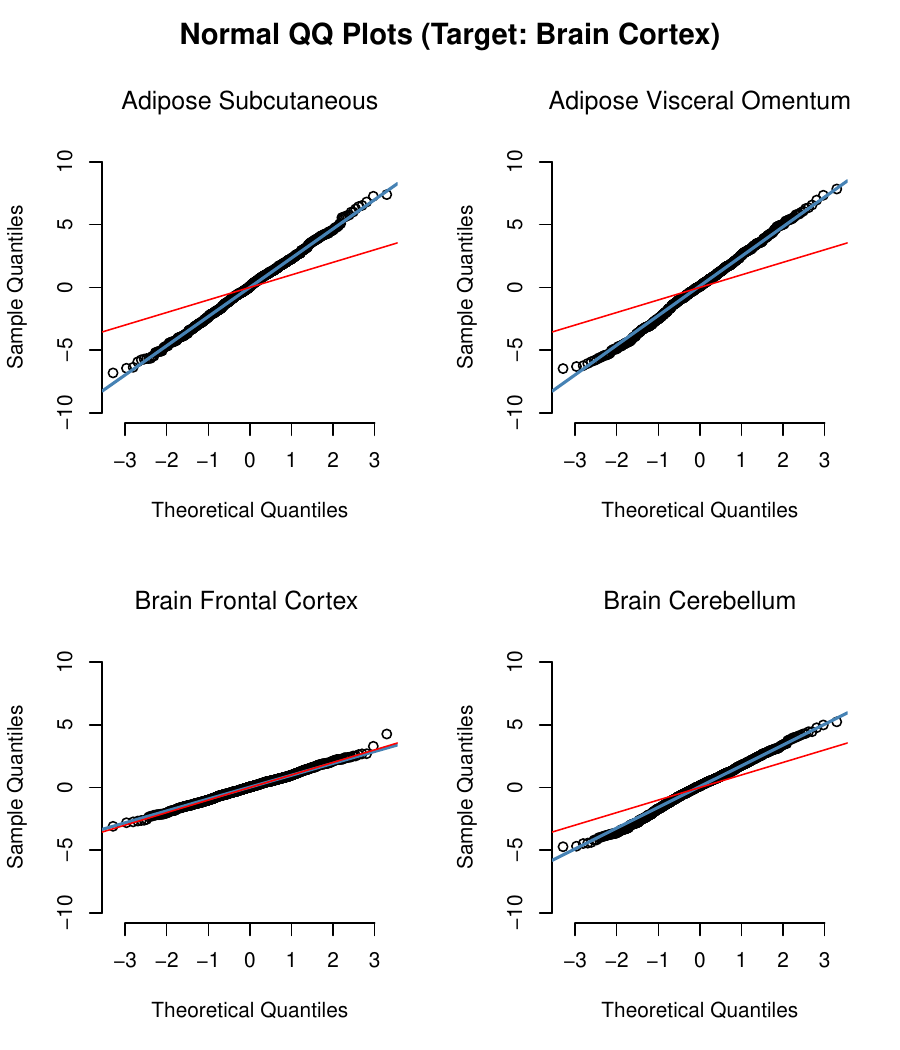}
    \end{minipage}
    \caption{\textbf{(Left)} Linear relationships between tissues: Each row and column corresponds to one tissue, with each dot representing the standardized covariance of the gene expression levels of a randomly selected gene pair. Note that certain tissues exhibit a linear relationship. For instance, a notable linear relationship is observed within the blue squared box, where a simple linear regression yielded an R-squared value of 0.708. In contrast, in the red squared box, a linear relationship is not evident, as indicated by the simple linear regression result with an R-squared value of 0.017.
    \textbf{(Right)} QQ plots of \eqref{eq:ttest} for various test functions for different perturbed tissues when the target tissue is brain cortex. The red line represents the expected QQ line if the perturbed data were all drawn i.i.d.\ from the target distribution (brain cortex).}
    \label{fig:gtex_pairs_plot}
\end{figure}

To model the relatedness of multiple  datasets, we will assume that each $\mathbb{P}^k$ is a random perturbation of a common target distribution $\mathbb{P}^0$. Then, 
we adopt the random distributional perturbation model in \cite{jeong2023calibrated}, where dense distributional shifts emerge as the superposition of numerous small random changes, to construct perturbed distributions $\mathbb{P}^{1}, \dots, \mathbb{P}^{K}$. In this random perturbation model, probabilities of events from a random probability measure $\mathbb{P}^k$ are slightly up-weighted or down-weighted compared to the target distribution $\mathbb{P}^0$. 

Without loss of generality, we construct distributional perturbations for uniform distributions on $[0,1]$. A result from probability theory shows that any random variable $D$ on a finite or countably infinite dimensional probability space can be written as a measurable function $D \stackrel{d}{=} h(U)$, where $U$ is a uniform random variable on $[0,1]$.\footnote{For any Borel-measurable random variable $D$ on a Polish (separable and completely metrizable) space $\mathcal{D}$, there exists a Borel-measurable function $h$ such that $D \stackrel{d}{=} h(U)$ where $U$ follows the uniform distribution on $[0,1]$ \citep{dudley2018real}.} With the transformation $h$, the construction of distributional perturbations for uniform variables can be generalized to the general cases by using
\begin{equation*}
        \mathbb{P}^{k}(D \in \bullet) = \mathbb{P}^{k}(h(U) \in \bullet).
\end{equation*}
Let us now construct the perturbed distribution $\mathbb{P}^{k}$ for a random variable $U$ that follows a uniform distribution under $\mathbb{P}^0$. Take $m$ bins $I_j = [(j-1)/m, j/m]$ for $j = 1, \dots, m$. Let $W^k_1, \dots, W^k_m$ be i.i.d.\ positive random variables with finite variance. 
We define the randomly perturbed distribution $\mathbb{P}^{k}$ by setting
$$
    \mathbb{P}^{k}(U \in \bullet) = \sum_j \mathbb{P}^0(U \in I_j \cap \bullet) \cdot \frac{W^k_j}{\sum_{j=1}^{m} W^k_j/m}.
$$ 
Then the datasets are generated as follows:  

1. Generate perturbed distributions: Start with the target distribution \(\mathbb{P}^0\) and generate 
\(K\) perturbed distributions \(\mathbb{P}^1, \dots, \mathbb{P}^K\) using random weights \(W = \{\{W_j^k\}_{j=1}^{m}\}_{k=1}^K\).

2. Draw samples: For each \(k = 1, \dots, K\), draw \(n_k\) i.i.d. samples, $\mathcal{D}_k = \{D_{ki}\}_{i=1}^{n_k}$, from the perturbed distribution \(\mathbb{P}^k\) ($D_{ki} \stackrel{i.i.d}{\sim}\mathbb{P}^k$), conditioned on the weights  \(W = \{\{W_j^k\}_{j=1}^{m}\}_{k=1}^K\).

Further, let $m = m(n)$ such that $\frac{m(n)}{\min(n_1, \dots, n_K)}$ converges to 0. This is the regime where the distributional uncertainty is of higher order than the sampling uncertainty.

\paragraph{Notation.} Let $\mathbb{P}^0$ be the target distribution on $\mathcal{D}$, and $\mathbb{P}^k$ be the perturbed distribution on $\mathcal{D}$ conditioned on $(W_j^k)_{j = 1}^m$. We draw an i.i.d. sample $\mathcal{D}_{k} = \{D_{ki}\}_{i=1}^{n_k}$ from $\mathbb{P}^k$, conditioned on $(W_j^k)_{j = 1}^m$. Let $P$ be the marginal distribution of $\{\{D_{ki}\}_{i=1}^{n_k},\{W_j^k\}_{j=1}^m\}_{k=1}^K$.
We denote $E$ as the expectation under $P$, $\mathbb{E}^0$ as the expectation under $\mathbb{P}^0$, and $\mathbb{E}^{k}$ as the expectation under $\mathbb{P}^{k}$. Specifically,  $\mathbb{E}^0[\phi(D)]$ is the mean of $\phi(D)$ for a random variable $D \in \mathcal{D}, D \sim \mathbb{P}^0$ and $\mathbb{E}^k[\phi(D)]$ is the mean of $\phi(D)$ for a random variable $D \in \mathcal{D}, D \sim \mathbb{P}^k$ conditioned on $(W_j^k)_{j = 1}^m$. Moreover, we denote $\hat{\mathbb{E}}^k$ as the sample average in $\mathcal{D}_k$. For example, $\hat{\mathbb{E}}^k[\phi(D)] = \sum_{i=1}^{n_k}\phi(D_{ik})/n_k$ where $D_{ki} \stackrel{\text{i.i.d.}}{\sim}\mathbb{P}^k$. We write $\text{Var}_{P}$ for the variance under $P$ and $\text{Var}_{\mathbb{P}^0}$ for the variance under $\mathbb{P}^0$. Specifically,  $\text{Var}_{\mathbb{P}^0}(\phi(D))$ is the variance of $\phi(D)$ for a random variable $D \in \mathcal{D}, D \sim \mathbb{P}^0$.

\begin{theorem}[Distributional CLT]\label{thm:perturbation-model}
Under the assumptions we mentioned above, for any Borel measurable square-integrable function $\phi_k: \mathcal{D}_k \xrightarrow[]{} \mathbb{R}$ for $k = 1, \dots, K$, we have
\begin{equation}
    \sqrt{m}\left(
   \begin{pmatrix}
    \hat{\mathbb{E}}^1[\phi_1(D)]\\
    \vdots \\
     \hat{\mathbb{E}}^K[\phi_K(D)]
    \end{pmatrix} - 
    \begin{pmatrix}
    \mathbb{E}^0[\phi_1(D)]\\
    \vdots \\
     \mathbb{E}^0[\phi_K(D)] 
    \end{pmatrix}\right) \xrightarrow[]{d}
     N\left(0, \Sigma^W  \odot \text{Var}_{\mathbb{P}^0}(\phi(D))\right)
\end{equation}
where  $\phi(D)^{\intercal} = (\phi_1(D), \dots,  \phi_K(D)) \in \mathbb{R}^{K}$, and $\Sigma^W\in \mathbb{R}^{K \times K}$ has 
$$(\Sigma^W)_{ij} = \frac{Cov_P(W^i, W^j)}{E[W^i]E[W^j]}.$$
Here, $\odot$ denotes element-wise multiplication (Hadamard product).
\end{theorem}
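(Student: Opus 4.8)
The approach is to split each coordinate as
\begin{equation*}
\hat{\mathbb E}^k[\phi_k(D)] - \mathbb E^0[\phi_k(D)] = \underbrace{\bigl(\hat{\mathbb E}^k[\phi_k(D)] - \mathbb E^k[\phi_k(D)]\bigr)}_{\text{sampling error}} + \underbrace{\bigl(\mathbb E^k[\phi_k(D)] - \mathbb E^0[\phi_k(D)]\bigr)}_{\text{perturbation error}},
\end{equation*}
to show that $\sqrt m$ times the vector of sampling errors vanishes in probability, to prove a joint CLT for $\sqrt m$ times the vector of perturbation errors, and to glue the two together with Slutsky's theorem. For the sampling error I would condition on the weights $W$: then $\hat{\mathbb E}^k[\phi_k(D)] - \mathbb E^k[\phi_k(D)]$ is an average of $n_k$ i.i.d.\ mean-zero terms, so Chebyshev's inequality gives $P\bigl(\lvert\sqrt m(\hat{\mathbb E}^k[\phi_k(D)] - \mathbb E^k[\phi_k(D)])\rvert > \varepsilon\bigr) \le \tfrac{m}{\varepsilon^2 n_k}\,E\bigl[\mathrm{Var}_{\mathbb P^k}(\phi_k(D))\bigr]$. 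By exchangeability of $(W^k_1,\dots,W^k_m)$ (the values $W^k_j/\bar W^k$, $j=1,\dots,m$, sum to $m$) one has $E[W^k_j/\bar W^k]=1$ for every $j$, hence $E\bigl[\mathbb E^k[\phi_k(D)^2]\bigr] = \mathbb E^0[\phi_k(D)^2] < \infty$, so the bound is $O(m/n_k)\to 0$ since $m/\min_k n_k\to 0$.

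For the perturbation error, write $D = h(U)$, set $g_k = \phi_k\circ h \in L^2([0,1])$, let $\psi^k_j = m\int_{I_j} g_k$ be the average of $g_k$ over the $j$-th bin, and let $\bar W^k = \tfrac1m\sum_{l=1}^m W^k_l$. Substituting the definition of $\mathbb P^k$ and using $\tfrac1m\sum_j \psi^k_j = \mathbb E^0[\phi_k(D)]$, a direct calculation (recentering both factors, which is harmless since one is already centered) gives
\begin{equation*}
\sqrt m\bigl(\mathbb E^k[\phi_k(D)] - \mathbb E^0[\phi_k(D)]\bigr) = \frac{1}{\bar W^k}\cdot\frac{1}{\sqrt m}\sum_{j=1}^m \bigl(\psi^k_j - \mathbb E^0[\phi_k(D)]\bigr)\bigl(W^k_j - E[W^k]\bigr).
\end{equation*}
Setting $(\xi_j)_k := (\psi^k_j - \mathbb E^0[\phi_k(D)])(W^k_j - E[W^k])/E[W^k]$, the vectors $\xi_1,\dots,\xi_m \in \mathbb R^K$ are independent and mean-zero (a triangular array, as $\psi^k_j$ and $I_j$ depend on $m$), the weight vectors $(W^1_j,\dots,W^K_j)$ being independent across $j$ with a common, $j$-independent joint law. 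The task then reduces to a multivariate CLT for $\tfrac1{\sqrt m}\sum_j \xi_j$, after which Slutsky (with $\bar W^k \xrightarrow{p} E[W^k]$ by the law of large numbers) yields the statement.

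The Hadamard product appears in the covariance bookkeeping: $\mathrm{Cov}_P\bigl((\xi_j)_k,(\xi_j)_l\bigr) = \tfrac{\mathrm{Cov}_P(W^k,W^l)}{E[W^k]E[W^l]}(\psi^k_j - \mathbb E^0[\phi_k(D)])(\psi^l_j - \mathbb E^0[\phi_l(D)])$, so
\begin{equation*}
\frac1m\sum_{j=1}^m \mathrm{Cov}_P(\xi_j) \longrightarrow \Sigma^W \odot \mathrm{Var}_{\mathbb P^0}(\phi(D))
\end{equation*}
as soon as $\tfrac1m\sum_j (\psi^k_j - \mathbb E^0[\phi_k(D)])(\psi^l_j - \mathbb E^0[\phi_l(D)]) \to \mathrm{Cov}_{\mathbb P^0}(\phi_k(D),\phi_l(D))$. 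This is the one genuinely analytic point: the left side equals $\mathrm{Cov}_{\mathbb P^0}\bigl(\mathbb E^0[g_k(U)\mid\mathcal F_m],\,\mathbb E^0[g_l(U)\mid\mathcal F_m]\bigr)$ with $\mathcal F_m = \sigma(\lfloor mU\rfloor)$, and since step functions constant on the intervals $I_j$ are dense in $L^2([0,1])$ as $m\to\infty$, the $L^2$-projections $\mathbb E^0[g_k(U)\mid\mathcal F_m]$ converge to $g_k(U)$ in $L^2$; continuity of covariances under $L^2$ convergence closes the step.

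The main obstacle is the Lindeberg condition for general square-integrable $\phi_k$ (the $\psi^k_j$ need not be bounded), and I would handle it by truncation. When every $\phi_k$ is bounded, $\lvert\psi^k_j - \mathbb E^0[\phi_k(D)]\rvert$ is uniformly bounded, so $\lVert\xi_j\rVert \le C\max_k\lvert W^k_j - E[W^k]\rvert$ and $\tfrac1m\sum_j E\bigl[\lVert\xi_j\rVert^2\,\mathbf 1\{\lVert\xi_j\rVert > \varepsilon\sqrt m\}\bigr] \to 0$ by dominated convergence from the finite second moments of the weights, whence the Lindeberg--Feller CLT gives the claimed Gaussian limit with the covariance above. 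For general $\phi_k \in L^2(\mathbb P^0)$, truncate each $\phi_k$ at level $B$; the truncated problem converges, for fixed $B$, to $N\bigl(0,\Sigma^W\odot\mathrm{Var}_{\mathbb P^0}(\phi^{(B)}(D))\bigr)$, while the discarded remainder has limiting covariance of trace at most $\sum_k \tfrac{\mathrm{Var}_P(W^k)}{E[W^k]^2}\lVert\phi_k - \phi_k^{(B)}\rVert_{L^2(\mathbb P^0)}^2 \to 0$ as $B\to\infty$ (using $\mathrm{Var}_{\mathbb P^0}(\mathbb E^0[\cdot\mid\mathcal F_m]) \le \mathrm{Var}_{\mathbb P^0}(\cdot)$). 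The standard approximation lemma for weak convergence then upgrades this to $\tfrac1{\sqrt m}\sum_j\xi_j \xrightarrow{d} N\bigl(0,\Sigma^W\odot\mathrm{Var}_{\mathbb P^0}(\phi(D))\bigr)$, and combining with the two Slutsky reductions completes the proof; the remaining steps are routine bookkeeping.
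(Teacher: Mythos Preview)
Your proposal is correct and matches the paper's proof essentially step for step: the same sampling/perturbation decomposition, the same Chebyshev bound (with $E[W^k_j/\bar W^k]=1$) for the sampling term, the same Lindeberg--Feller CLT for bounded test functions with the covariance limit established via the $L^2$-convergence of bin-averaged step functions (your $\mathbb E^0[\cdot\mid\mathcal F_m]$ is the paper's projection $\Pi_m$), and the same truncation argument to pass from bounded to general $L^2$ functions, with Slutsky handling $\bar W^k \to E[W^k]$. The only cosmetic difference is that the paper chooses bounded approximants $\phi_k^B$ with $\mathbb E^0[\phi_k^B]=\mathbb E^0[\phi_k]$ rather than hard truncations, but this does not affect the argument.
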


\begin{remark}[Random shift in $Y|X$]
As an example, let us study how conditional expectations shift under this model. We consider the case where the data consists of both covariates $X$ and target $Y$, that is $D = (X,Y)$. Let $A$ be some subset of the sample space of $X$ with $\mathbb{P}^0(X \in A)> 0$. Using a Taylor approximation,
\begin{equation*}
 \hat{\mathbb{E}}^1[Y|X \in A] - \hat{\mathbb{E}}^0[Y|X \in A] = \hat{\mathbb{E}}^1[\phi] - \hat{\mathbb{E}}^0[\phi] + o_P(1/\sqrt{m})
\end{equation*}
where $\phi = \frac{1_{X \in A} (Y - \mathbb{E}^0[Y|X \in A])}{\mathbb{P}^0[X \in A]}$. Thus, by the CLT,
\begin{equation*}
\sqrt{m}(\hat{\mathbb{E}}^1[Y|X \in A] - \hat{\mathbb{E}}^0[Y|X \in A]) \stackrel{d}{\rightarrow} \mathcal{N}(0,\Sigma_{11}^W \frac{\text{Var}_{\mathbb{P}^0}(Y | X \in A)}{\mathbb{P}^0(X \in A)} ).
\end{equation*}
Hence, the conditional expectation is randomly shifted based on three factors: 1) it is proportional to the distribution shift strength, $\sqrt{\Sigma_{11}^W/m}$; 2) it is proportional to the conditional noise,  $\sqrt{\text{Var}_{\mathbb{P}^0}(Y | X \in A)}$; and 3) it is inversely proportional to the probability $\sqrt{\mathbb{P}^0(X \in A)}$. This random shift in $Y | X$ needs to be accounted for both estimation and inference.
\end{remark}

\begin{remark}[Distributional covariance]
     We call $\Sigma^W/m$ the covariance between distributional perturbations. Two perturbed distributions $\mathbb{P}^{k}$ and $\mathbb{P}^{k'}$ are positively correlated if $\Sigma_{kk'}^W > 0$.  This means that when one perturbation slightly increases the probability of a certain event $A$, the other distribution tends to slightly increase the probability of this event as well. This can be seen by choosing the test function $\phi_k = 1_{D \in A}$:
\begin{equation*}
 \text{Cov}_P(\hat{\mathbb{P}}^k(D \in A), \hat{\mathbb{P}}^{k'}(D \in A)) =   \text{Cov}_P(\hat{\mathbb{E}}^k[1_{D \in A}], \hat{\mathbb{E}}^{k'}[1_{D \in A}]) \approx \frac{\Sigma_{k,k'}^W}{m} \text{Var}_{\mathbb{P}^0}(1_{D \in A}).
\end{equation*}
However, if two perturbed distributions are negatively correlated, a (random) increase in the probability of an event under one distribution would often coincide with a (random) decrease in the probability of this event under the other distribution, compared to $\mathbb{P}^0$.
\end{remark}

\begin{remark}[Estimation of variance]\label{remark:var}
    The variance under the target distribution $\text{Var}_{\mathbb{P}^0}(\phi(D))$ can be estimated via the empirical variance of $\phi$ on the pooled data $\{(D_{ki})_{i=1}^{n_k}\}_{k=1}^K$. Then, if $\phi$ has a finite fourth moment, we have $\widehat{\text{Var}}_{\mathbb{P}^0}(\phi(D)) = \text{Var}_{\mathbb{P}^0}(\phi(D)) + o_p(1)$. The details can be found in the Appendix \ref{sec:remark-2}.
\end{remark}

An extended version of Theorem~\ref{thm:perturbation-model} where $\phi_k: \mathcal{D}_k \xrightarrow[]{} \mathbb{R}^L$ for some $L \geq 1$ and the proof of Theorem~\ref{thm:perturbation-model} can be found in the Appendix, Section \ref{sec:proofs}. In this paper, we consider the target distribution as fixed. In the Appendix \ref{sec:remark}, we discuss a version of Theorem~\ref{thm:perturbation-model} that allows for a randomly shifted target distribution. In Section~\ref{sec:special-cases} of the Appendix, we discuss several special cases of the random perturbation model, including random shift across time, i.i.d.\ shifts, and random shifts between locations.

Theorem~\ref{thm:perturbation-model} tells us that $(\Sigma_{ij}^W)/m$ is the asymptotic covariance of $\hat{\mathbb{E}}^i[\phi(D)]$ and $\hat{\mathbb{E}}^j[\phi(D)]$ for \textit{any} Borel-measurable square-integrable function $\phi$ with unit variance under $\mathbb{P}^0$. 
Remarkably, under our random distributional perturbation model, a $(K + \binom{K}{2})$-dimensional parameter $\Sigma^W$ quantifies the similarity of multiple distributions and captures all possible correlations of all square-integrable functions.

\paragraph{The random shift model explains the linear patterns in the GTEx scatterplot.} Consider two perturbed distributions $\mathbb{P}^{1}$ and $\mathbb{P}^{2}$, and $L$ uncorrelated test functions with unit variances, $\phi_1(D), \dots, \phi_L(D)$. In the GTEx example, $\phi_{\ell}$ is a standardized product between gene expression levels of a randomly selected $\ell$-th gene pair. Moreover, in Appendix~\ref{sec:add-details}, we see that these $L$ test functions are approximately uncorrelated. From Theorem~\ref{thm:perturbation-model}, we have 
\begin{equation*}
     \sqrt{m}\left(
   \begin{pmatrix}
    \hat{\mathbb{E}}^1[\phi_{\ell}(D)]\\
     \hat{\mathbb{E}}^2[\phi_{\ell}(D)]
    \end{pmatrix} - 
    \begin{pmatrix}
    \mathbb{E}^0[\phi_{\ell}(D)]\\
     \mathbb{E}^0[\phi_{\ell}(D)] 
    \end{pmatrix}\right) \stackrel{d}{=}  \text{ }Z_{\ell} + o_p(1),
\end{equation*}
where
\begin{equation*}
     Z_{\ell} \stackrel{i.i.d}{\sim}
     N\left(0, \begin{pmatrix}
         \Sigma_{11}^{W} &  \Sigma_{12}^{W}\\
          \Sigma_{21}^{W} &  \Sigma_{22}^{W}
     \end{pmatrix}\right).
\end{equation*}
Therefore, $(
    \hat{\mathbb{E}}^1[\phi_{\ell}(D)] - \mathbb{E}^0[\phi_{\ell}(D)], 
     \hat{\mathbb{E}}^2[\phi_{\ell}(D)]- \mathbb{E}^0[\phi_{\ell}(D)])$ for $\ell = 1, \dots, L$, can be viewed as independent draws from a bivariate Gaussian distribution with distributional covariance $\Sigma^W/m$. This explains the linear patterns observed in the left plot of Figure~\ref{fig:gtex_pairs_plot}. Gaussianity can be evaluated by QQ plots of the statistic:
     \begin{equation} \label{eq:ttest}
          \left( \frac{1}{n_k} + \frac{1}{n_0} \right)^{-1/2} \left(\frac{\frac{1}{n_k}\sum_{i=1}^{n_k} \phi_{\ell}(D_{ki}) - \frac{1}{n_0}\sum_{i=1}^{n_0} \phi_{\ell}(D_{0i}) }
       {\hat{\text{sd}}_{\mathbb{P}^0}(\phi_{\ell}(D))} 
       \right).
    \end{equation}
    The QQ-plots of the statistic in \eqref{eq:ttest} are presented on the right-hand side of Figure~\ref{fig:gtex_pairs_plot}, when the target tissue is set to be brain cortex. These plots indicate that statistics in \eqref{eq:ttest} indeed follow a Gaussian distribution.

\section{Out-of-distribution generalization}\label{sec:distribution-generalization}

In this section, we demonstrate how, under the random distributional perturbation model, we can share information across datasets from different times and locations, and infer parameters of a partially observed target distribution. Specifically, we show how the distributional CLT from the previous section allows us to identify the optimal convex combination of source distributions to best approximate the target distribution.

We consider the following domain adaptation setting throughout the section. We aim to predict a target $Y$ based on some covariates $X$, that is the data is $D = (X,Y)$. We want to estimate $\theta^0 = \arg \min_\theta \mathbb{E}^0[\mathcal{L}(\theta,X,Y)]$ for some loss function $\mathcal{L}(\theta,X,Y)$. We consider the setting where we observe full $D = (X,Y)$ only on the source distributions $\mathbb{P}^1, \dots, \mathbb{P}^K$. More specifically, for the $k$-th source data ($k \in \{1, 2, \dots, K\}$), we observe $n_k$ i.i.d. samples $\mathcal{D}_k = \{D_{k1}, \dots, D_{kn_k}\}$ drawn from the source distribution $\mathbb{P}^k$. Furthermore, there are $n_0$ i.i.d. samples from the target distribution $\mathbb{P}^0$, but we only observe a subset $X$ and have $\mathcal{D}_0 = \{X_{01}, \dots, X_{0n_0}\}$.

 For any fixed non-negative weights $\beta_1,\ldots,\beta_K$ with $\sum_{k=1}^K \beta_k = 1$, one can consider weighted empirical risk minimization,
that is, one can set 
\begin{equation*}
   \hat \theta^\beta = \arg \min_\theta \sum_k \beta_k \hat{\mathbb{E}}^k[\mathcal{L}(\theta,X,Y)].
\end{equation*}
In the following, we want to study the excess risk
\begin{equation*}
    \mathbb{E}^0[\mathcal{L}(\hat \theta^\beta,X, Y)]  - \mathbb{E}^0[\mathcal{L}( \theta^0,X,Y)].
\end{equation*}
This will give us insights into how to choose the weights $\beta$ in an optimal fashion, and allow us to conduct statistical inference. 

\begin{lemma}[Out-of-distribution error of weighted ERM]\label{lemma:ooderror}
For each $\theta$ in an open subset of $\Omega$, let $\theta \mapsto \partial_\theta \mathcal{L}(\theta,x,y)$ be twice continuously differentiable in $\theta$ for every $x,y$. Assume that the matrix $\mathbb{E}^0[\partial_\theta^2 \mathcal{L}(\theta,X,Y)]$ exists and is nonsingular. Assume that the third partial derivatives of $\theta \mapsto \mathcal{L}(\theta,D)$ are dominated by a fixed function $h(D)$ for every $\theta$ in a neighborhood of $\theta^0$. We assume that $\partial_\theta \mathcal{L}(\theta^0,D)$, $\partial_\theta^2 \mathcal{L}(\theta^0,D)$ and $h(D)$ are square-integrable under $\mathbb{P}^0$. Assume that $\hat \theta^\beta - \theta^0 = o_P(1)$. Then,
\begin{equation*}
 m \left( \mathbb{E}^0[\mathcal{L}(\hat \theta^\beta,X,Y)]  - \mathbb{E}^0[\mathcal{L}( \theta^0,X,Y)] \right) \stackrel{d}{\rightarrow} \mathcal{E}
\end{equation*}
for some random variable $\mathcal{E}$, with 
\begin{equation*}
\mathbb{E}[\mathcal{E}] =  \beta^\intercal \Sigma^W \beta \cdot \mathrm{Trace}( \mathbb{E}^0[\partial_\theta^2 \mathcal{L}(\theta^0,X,Y)]^{-1} \mathrm{Var}_{\mathbb{P}^0}(  \partial_\theta \mathcal{L}(\theta^0,X,Y) )).
\end{equation*}
\end{lemma}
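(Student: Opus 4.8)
The plan is to reduce the excess risk to a quadratic form in the weighted sum of first-order score errors, and then apply the distributional CLT (Theorem~\ref{thm:perturbation-model}) to identify the limiting distribution and compute its mean. First I would establish the usual $M$-estimation expansion: since $\hat\theta^\beta$ minimizes the weighted empirical risk and $\hat\theta^\beta - \theta^0 = o_P(1)$, a Taylor expansion of the first-order condition $\sum_k \beta_k \partial_\theta \hat{\mathbb{E}}^k[\mathcal{L}(\hat\theta^\beta,X,Y)] = 0$ around $\theta^0$, together with the domination of the third derivatives by $h(D)$ and the nonsingularity of $H := \mathbb{E}^0[\partial_\theta^2\mathcal{L}(\theta^0,X,Y)]$, yields
\begin{equation*}
  \hat\theta^\beta - \theta^0 = -H^{-1} \sum_k \beta_k \, \partial_\theta \hat{\mathbb{E}}^k[\mathcal{L}(\theta^0,X,Y)] + o_P\!\left(\big\| \textstyle\sum_k \beta_k \,\partial_\theta \hat{\mathbb{E}}^k[\mathcal{L}(\theta^0,X,Y)]\big\|\right).
\end{equation*}
Here I would invoke the vector-valued (extended) version of Theorem~\ref{thm:perturbation-model} applied to the score $\phi_k = \partial_\theta \mathcal{L}(\theta^0,\cdot)$ to conclude that $\sqrt{m}\,\partial_\theta \hat{\mathbb{E}}^k[\mathcal{L}(\theta^0,X,Y)] = \sqrt{m}\big(\partial_\theta\hat{\mathbb{E}}^k[\mathcal{L}(\theta^0)] - \mathbb{E}^0[\partial_\theta\mathcal{L}(\theta^0)]\big)$ is $O_P(1)$, using that $\mathbb{E}^0[\partial_\theta\mathcal{L}(\theta^0,X,Y)] = 0$ by the population first-order condition. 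Hence $\hat\theta^\beta - \theta^0 = O_P(1/\sqrt{m})$ and the error term above is $o_P(1/\sqrt m)$.

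Next I would expand the excess risk itself. Because $\theta^0$ minimizes $\theta \mapsto \mathbb{E}^0[\mathcal{L}(\theta,X,Y)]$, the linear term vanishes and a second-order Taylor expansion gives
\begin{equation*}
  \mathbb{E}^0[\mathcal{L}(\hat\theta^\beta,X,Y)] - \mathbb{E}^0[\mathcal{L}(\theta^0,X,Y)] = \tfrac12 (\hat\theta^\beta - \theta^0)^\intercal H (\hat\theta^\beta - \theta^0) + o_P(\|\hat\theta^\beta-\theta^0\|^2),
\end{equation*}
where the remainder is controlled again by the domination assumption and square-integrability of $h(D)$. Substituting the expansion for $\hat\theta^\beta - \theta^0$ and multiplying by $m$, I get
\begin{equation*}
  m\left(\mathbb{E}^0[\mathcal{L}(\hat\theta^\beta,X,Y)] - \mathbb{E}^0[\mathcal{L}(\theta^0,X,Y)]\right) = \tfrac12\, S_m^\intercal H^{-1} S_m + o_P(1),
\end{equation*}
where $S_m := \sqrt m \sum_k \beta_k \, \partial_\theta\hat{\mathbb{E}}^k[\mathcal{L}(\theta^0,X,Y)]$. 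By the extended Theorem~\ref{thm:perturbation-model} and the continuous mapping theorem, $S_m \xrightarrow{d} S \sim \mathcal N(0,\Lambda)$, so $\mathcal{E} := \tfrac12 S^\intercal H^{-1} S$. It remains to (i) identify the covariance $\Lambda$ and (ii) compute $\mathbb{E}[\mathcal{E}] = \tfrac12\,\mathrm{Trace}(H^{-1}\Lambda)$.

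For the covariance: writing $g(D) = \partial_\theta\mathcal{L}(\theta^0,D) \in \mathbb{R}^p$, the vector version of the distributional CLT says the stacked vector $\big(\sqrt m(\hat{\mathbb{E}}^k[g] - \mathbb{E}^0[g])\big)_{k=1}^K$ is asymptotically Gaussian with covariance built from $\Sigma^W_{ij}$ "tensored" with $\mathrm{Var}_{\mathbb{P}^0}(g(D))$; taking the $\beta$-weighted combination collapses the $K$-block structure and yields $\Lambda = (\beta^\intercal\Sigma^W\beta)\,\mathrm{Var}_{\mathbb{P}^0}(g(D))$. Plugging in,
\begin{equation*}
  \mathbb{E}[\mathcal{E}] = \tfrac12\,\mathrm{Trace}\!\left(H^{-1}\Lambda\right) = \tfrac12\,(\beta^\intercal\Sigma^W\beta)\,\mathrm{Trace}\!\left(H^{-1}\mathrm{Var}_{\mathbb{P}^0}(\partial_\theta\mathcal{L}(\theta^0,X,Y))\right),
\end{equation*}
which matches the claimed formula up to the factor $\tfrac12$ (I would double-check the normalization of the Hessian in the statement — the paper may be absorbing the $\tfrac12$ into $\mathbb{E}^0[\partial_\theta^2\mathcal{L}]$, or the claimed expression is for $2m(\cdots)$; this bookkeeping is the kind of thing to reconcile with their conventions). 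The main obstacle I anticipate is not any single step but making the two Taylor expansions rigorous simultaneously: one needs the remainder in the first-order-condition expansion to be genuinely $o_P$ of the leading term (not just $o_P(1)$), which requires combining the domination by $h(D)$, its $\mathbb{P}^0$-square-integrability, and a uniform law of large numbers for $\hat{\mathbb{E}}^k[h(D)]$ — and the subtlety here is that the samples are drawn from the random perturbed measures $\mathbb{P}^k$, not from $\mathbb{P}^0$, so one must argue (e.g. via the $m/n_k \to 0$ regime and Theorem~\ref{thm:perturbation-model} applied to $h$ itself) that $\hat{\mathbb{E}}^k[h(D)] = \mathbb{E}^0[h(D)] + o_P(1)$ and likewise $\partial_\theta^2\hat{\mathbb{E}}^k[\mathcal{L}(\theta^0)] \to H$ in probability. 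Once those consistency facts are in hand, the rest is routine.
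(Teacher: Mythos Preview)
Your proposal is correct and follows essentially the same route as the paper's own proof: a first Taylor expansion of the weighted empirical score around $\theta^0$ (controlling the Hessian via Extended Theorem~\ref{extended-theorem} applied to $\partial_\theta^2\mathcal{L}(\theta^0,\cdot)$ and the third-derivative remainder via the same theorem applied to $h$), yielding $\sqrt{m}(\hat\theta^\beta-\theta^0)\stackrel{d}{\to}\mathcal N(0,\beta^\intercal\Sigma^W\beta\cdot H^{-1}\mathrm{Var}_{\mathbb{P}^0}(g)H^{-1})$; then a second Taylor expansion of the population risk at $\theta^0$ to reduce the excess risk to the quadratic form; then the trace formula for the mean of a Gaussian quadratic form. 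Your anticipated subtlety --- that $\hat{\mathbb{E}}^k[h(D)]$ and $\hat{\mathbb{E}}^k[\partial_\theta^2\mathcal{L}(\theta^0,D)]$ must be shown consistent for their $\mathbb{P}^0$-counterparts despite sampling from the perturbed $\mathbb{P}^k$ --- is exactly what the paper handles, and it does so precisely as you suggest, by invoking the distributional CLT on those functions.

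Your caution about the factor $\tfrac12$ is well placed: the paper's own proof obtains $\tfrac12(\hat\theta^\beta-\theta^0)^\intercal H(\hat\theta^\beta-\theta^0)$ from the Taylor expansion and then, in the very next displayed equation, writes the quadratic form without the $\tfrac12$ before reading off the trace. So the discrepancy you flagged is a slip in the paper rather than a convention you are missing; your computation $\mathbb{E}[\mathcal{E}]=\tfrac12(\beta^\intercal\Sigma^W\beta)\,\mathrm{Trace}(H^{-1}\mathrm{Var}_{\mathbb{P}^0}(g))$ is the one consistent with the standard second-order expansion.
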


The proof of Lemma~\ref{lemma:ooderror} is provided in Appendix~\ref{sec:ood-error-proof}, along with the additional regularity assumptions under which the consistency assumption stated in Lemma~\ref{lemma:ooderror} holds. 

By this lemma, the optimal $\beta$ in terms of out-of-distribution error is
\begin{equation*}
    \beta^{*} =  \arg \min_{\substack{\beta:\beta^{\intercal} \mathbb{1} = 1\\\beta \geq 0} } \beta^\intercal \Sigma^W \beta.
\end{equation*}
A priori, $\beta^*$ is unknown to the researcher. In the following, we will describe estimation and inference for $\beta^*$. Our estimation strategy is based on the following observation. Using the distributional CLT, for any function $\phi(X)$ and any $\beta$,
\begin{equation}
  \sqrt{m}( \hat{\mathbb{E}}^0[\phi(X)] - \sum_{k=1}^K \beta_k \hat{\mathbb{E}}^0[\phi(X)] )  \stackrel{d}{\rightarrow} \mathcal{N}(0, \beta^\intercal \Sigma^W \beta \cdot \text{Var}_{\mathbb{P}^0}(\phi(X))).
\end{equation}
We can use this fact that the covariance term of the limiting distribution is inflated by the factor $\beta^{\intercal}\Sigma^W\beta$ regardless of the test functions to estimate $\beta^*$.  Consider different test functions of $X$, that is $\phi_{\ell}(X)$, $\ell = 1, \dots, L$.  For example, we may use individual covariates as test functions, $\phi_\ell = X_\ell$. Then, we can estimate $\beta^*$ by solving
\begin{equation}\label{eq:least-squares}
    \hat \beta = \argmin_{\substack{\beta:\beta^{\intercal} \mathbb{1} = 1 \\ \beta \geq 0}} \sum_{l=1}^L (\hat{\mathbb{E}}^0[\phi_l(X)] - \sum_k \beta_k \hat{\mathbb{E}}^k[\phi_l(X)]  )^2.
\end{equation}
and set the final estimate of $\theta^0$ as
\begin{equation*}
   \hat \theta = \arg \min_\theta \sum_k \hat \beta_k \hat{\mathbb{E}}^k[\mathcal{L}(\theta,X,Y)].
\end{equation*}
We will discuss the accuracy of this estimate in the following section. More specifically, we will derive confidence intervals for $\beta^*$ and $\theta^0$.

\begin{remark}
    In practice, one might wonder how to choose different test functions $\phi_l, l=1,\ldots,L$. 
    From a theoretical standpoint, under the distributional uncertainty model, one can technically choose any function with finite variance. 
    However, for actual applications, we suggest the following guideline. Our method considers a distribution $\mathbb{P}^k$ to be ``close" to $\mathbb{P}^0$ if the averaged values of test functions on $\mathcal{D}_k$ closely align with the corresponding averaged values on $\mathcal{D}_0$. Therefore, these test functions $\phi$ should be chosen to capture the dimensions of the problem where a close match is important. For example, in the numerical experiments in Section~\ref{sec:acs}, the target function of interest is the logarithmic value of income and using the means of covariates that are well-known to be related to the income variable as test functions demonstrates good performance. 
    We also provide simple diagnostics, presented in Figure~\ref{fig:gtex-diagnostic-plots} as standard residual plots and normal QQ-plots, that allow one to assess the fit of the distributional perturbation model and the choice of test functions.
\end{remark}

\subsection{Inference for $\beta^*$ and $\theta^0$}

In the following, we discuss how to form asymptotically valid confidence intervals for $\beta^*$ and $\theta^0$ when $\beta^*$ and $\hat{\beta}$ are allowed to have negative weights. As discussed earlier, we choose $L$ different test functions $\phi_1(X), \dots, \phi_L(X)$. For now we assume that test functions are uncorrelated and have unit variances under $\mathbb{P}^0$. Later we will discuss cases where test functions are correlated and have different variances.  In the previous section, we saw that $\hat \beta$ can be estimated via a least-squares problem with a linear constraint on the weights.
To remove the linear constraint in equation~\eqref{eq:least-squares}, we perform the following reparametrization,
\begin{align*}
    \hat \beta_{1:(K-1)}&= \argmin_{\beta} \sum_{l=1}^L \left((\hat{\mathbb{E}}^0[\phi_l(X)]-\hat{\mathbb{E}}^K[\phi_l(X)])  - \sum_{k=1}^{K-1} \beta_k (\hat{\mathbb{E}}^k[\phi_l(X)]- \hat{\mathbb{E}}^K[\phi_l(X)]) \right )^2.
\end{align*}
We can obtain $\hat \beta_K$ via $\hat \beta_K = 1- \sum_{k=1}^{K-1} \hat \beta_k$. We can re-write this as the following linear regression problem
\begin{equation*}
	 \hat \beta_{1:(K-1)} = \argmin_{\beta_{1:(K-1)} \in \mathbb{R}^{K-1}}  \| \Tilde{\Phi}^0 - \Tilde{\Phi}\beta_{1:(K-1)} \|_2^2,
\end{equation*}
where the feature matrix in the linear regression is defined as 
\begin{equation*}
    \Tilde{\Phi} = \begin{pmatrix}
        \hat{\mathbb{E}}^1[\phi_{1}]-\hat{\mathbb{E}}^K[\phi_{1}]  & \dots & \hat{\mathbb{E}}^{K-1}[\phi_{1}] -\hat{\mathbb{E}}^K[\phi_{1}]\\
        \vdots & & \vdots \\
        \hat{\mathbb{E}}^1[\phi_{L}] -\hat{\mathbb{E}}^K[\phi_{L}] & \dots & \hat{\mathbb{E}}^{K-1}[\phi_{L}]-\hat{\mathbb{E}}^K[\phi_{L}] 
    \end{pmatrix} \in \mathbb{R}^{L \times (K-1)},
\end{equation*}
and the outcome vector in the linear regression is defined as
\begin{equation*}
    \Tilde{\Phi}^0 = \left( \hat{\mathbb{E}}^0[\phi_{1}]-\hat{\mathbb{E}}^K[\phi_{1}] , \dots, \hat{\mathbb{E}}^0[\phi_{L}] -\hat{\mathbb{E}}^K[\phi_{L}]\right)^{\intercal} \in \mathbb{R}^{L \times 1}.
\end{equation*}
Since test functions were assumed to be uncorrelated and have unit variances under $\mathbb{P}_0$, using Theorem~\ref{thm:perturbation-model}, rows of the feature matrix $\Tilde{ \Phi}$ and the outcome vector $\Tilde{ \Phi}^0$ are asymptotically i.i.d., with each row drawn from a centered Gaussian distribution. Therefore, the problem may be viewed as a standard multiple linear regression problem where the variance of the coefficient $\hat{\beta}_{1:(K-1)}$  is estimated as
\begin{equation*}
    \widehat{\text{Var}}(\hat{\beta}_{1:(K-1)}) = (\Tilde{\Phi}^{\intercal}\Tilde{\Phi})^{-1} \hat{\sigma}^2,
\end{equation*}
and the residual variance is calculated as
\begin{equation*}
    \hat{\sigma}^2 = \frac{1}{L - K + 1} \| \Tilde{\Phi}^0 - \Tilde{\Phi}\hat\beta \|_2^2.
\end{equation*}
Different from standard linear regression, in our setting each column represents a data distribution and each row represents a test function, and Gaussianity does not hold exactly in finite samples. Therefore, a priori it is unclear whether standard statistical tests in linear regression (such as $t$-test and $F$-test) carry over to the distributional case. The following theorem tells us that the standard linear regression results still hold. Therefore, we can conduct statistical inference for the optimal weights $\beta^*$ in the same manner as we conduct t-tests and F-tests in a standard linear regression. The proof of Theorem~\ref{theorem:dlm} can be found in Appendix, Section~\ref{sec:dlmproofs}. Note that these results assume a finite number of test functions and datasets but are asymptotic in terms of $m(n)$ as they rely on the distributional CLT.

\begin{theorem}\label{theorem:dlm}
    Assume that test functions $\phi_1, \dots, \phi_L$ are uncorrelated and have unit variances under $\mathbb{P}^0$. Let $\Tilde{\Phi}$ and $\hat{\sigma}^2$ be defined as above. Then, we have
     \begin{equation*}
     \left((\Tilde{\Phi}^{\intercal}\Tilde{\Phi})^{-1}\hat{\sigma}^2\right)^{-\frac{1}{2}} \left(\hat{\beta}_{1:(K-1)} - \beta^*_{1:(K-1)}\right) \stackrel{d}{=} t_{K-1}(L - K + 1) +o_p(1),
     \end{equation*}
    where $t_{p}(d)$ follows the $p$-dimensional multivariate t-distribution with $d$ degrees of freedom. 
\end{theorem}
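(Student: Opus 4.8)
The plan is to reduce the studentized statistic to a fixed continuous functional of the rescaled regression data $\sqrt{m}\,[\Tilde{\Phi}\mid\Tilde{\Phi}^0]$, to pin down the limit of that data via the distributional CLT, and then to recognize the limiting functional as exactly the studentized OLS statistic of a Gaussian linear model, whose law is the multivariate $t$-distribution.

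First I would apply the vector-valued extension of Theorem~\ref{thm:perturbation-model} to the stacked test function $\phi=(\phi_1,\dots,\phi_L)$. Since the $\phi_\ell$ are uncorrelated with unit variance under $\mathbb{P}^0$, the limiting covariance of the array $\sqrt{m}\,(\hat{\mathbb{E}}^k[\phi_\ell]-\mathbb{E}^0[\phi_\ell])_{k=1,\dots,K;\,\ell=1,\dots,L}$ factorizes as $\Sigma^W\otimes I_L$; equivalently, the $L$ vectors $g_\ell:=\sqrt{m}\,(\hat{\mathbb{E}}^k[\phi_\ell]-\mathbb{E}^0[\phi_\ell])_{k=1}^{K}$ are asymptotically i.i.d.\ $N(0,\Sigma^W)$. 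Because the target distribution is fixed and $n_0$ grows fast enough that $m/n_0\to 0$, the target-sample terms satisfy $\sqrt{m}\,(\hat{\mathbb{E}}^0[\phi_\ell]-\mathbb{E}^0[\phi_\ell])=o_P(1)$, so by Slutsky the $\ell$-th row of $\sqrt{m}\,\Tilde{\Phi}$ equals $z_\ell+o_P(1)$ with $z_\ell:=(g_\ell^{k}-g_\ell^{K})_{k=1}^{K-1}$, and the $\ell$-th entry of $\sqrt{m}\,\Tilde{\Phi}^0$ equals $-g_\ell^{K}+o_P(1)$, jointly over $\ell$; the limiting array thus has i.i.d.\ rows distributed as $(z_\ell,-g_\ell^{K})$.

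Next I would identify $\beta^*$ as a population regression coefficient. Under the reparametrization $\beta_K=1-\sum_{k<K}\beta_k$ one has $\sum_k\beta_k g_\ell^k=g_\ell^K+\beta_{1:(K-1)}^\intercal z_\ell$, hence $\beta^\intercal\Sigma^W\beta=\mathrm{Var}(g_\ell^K+\beta_{1:(K-1)}^\intercal z_\ell)=\mathbb{E}[(g_\ell^K+\beta_{1:(K-1)}^\intercal z_\ell)^2]$, so $\beta^*_{1:(K-1)}$ is exactly the least-squares coefficient of the population regression of $-g_\ell^K$ on $z_\ell$. Writing $-g_\ell^K=\beta^{*\intercal}_{1:(K-1)}z_\ell+\epsilon_\ell$, the normal equations give $\mathrm{Cov}(z_\ell,\epsilon_\ell)=0$, and since everything is jointly Gaussian and i.i.d.\ across $\ell$, the vector $(\epsilon_\ell)_{\ell=1}^{L}$ is independent of the design $(z_\ell)_{\ell=1}^{L}$ and is $N(0,\sigma^2 I_L)$ with $\sigma^2=\beta^{*\intercal}\Sigma^W\beta^*$ (assumed positive; $\mathrm{Cov}(z_\ell)$ is assumed nonsingular, so for $L\ge K$ the Gram matrix is a.s.\ invertible).

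Finally, on the event of probability tending to one on which $\Tilde{\Phi}^\intercal\Tilde{\Phi}$ is nonsingular, the estimator $\hat\beta_{1:(K-1)}=(\Tilde{\Phi}^\intercal\Tilde{\Phi})^{-1}\Tilde{\Phi}^\intercal\Tilde{\Phi}^0$, the matrix $(\Tilde{\Phi}^\intercal\Tilde{\Phi})^{-1}\hat{\sigma}^2$, and hence the whole studentized statistic are continuous functions of $\sqrt{m}\,[\Tilde{\Phi}\mid\Tilde{\Phi}^0]$ in which the $\sqrt{m}$ factors cancel; by the continuous mapping theorem it converges in distribution to the same functional of the limiting array, i.e.\ to the studentized OLS statistic of the Gaussian linear model $-g^K=Z\beta^*_{1:(K-1)}+\epsilon$. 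Conditioning on $Z$ and invoking the classical Gaussian-linear-model facts --- $\hat\beta-\beta^*\mid Z\sim N(0,\sigma^2(Z^\intercal Z)^{-1})$, $(L-K+1)\hat\sigma^2/\sigma^2\mid Z\sim\chi^2_{L-K+1}$, and the independence of these two --- shows the conditional law of the studentized statistic is $t_{K-1}(L-K+1)$ for every $Z$, hence also unconditionally, which is the assertion. I expect the main obstacle to be the first step: extracting from the distributional CLT the \emph{joint} asymptotic independence across $\ell$ of the vectors $g_\ell$, not merely their marginal Gaussianity, which is precisely where the uncorrelated/unit-variance hypotheses on the test functions enter, together with controlling the target-sample term $\hat{\mathbb{E}}^0[\phi_\ell]$ under the growth condition on $n_0$; everything after that is the textbook derivation of the $t$-statistic applied after the continuous-mapping reduction.
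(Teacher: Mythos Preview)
Your proposal is correct and takes a genuinely different route from the paper's own proof. The paper works with the \emph{constrained} formulation $\hat\beta=\arg\min_{\beta^\intercal\bm 1=1}\beta^\intercal\Phi^\intercal\Phi\beta$, uses its closed form $(\Phi^\intercal\Phi)^{-1}\bm 1/\bm 1^\intercal(\Phi^\intercal\Phi)^{-1}\bm 1$, passes to the limiting Gaussian array $\bm Z$, and then invokes Wishart and inverse-Wishart distribution theory (Theorems~3.2.11 and~3.2.12 of Muirhead and Theorem~3 of Bodnar--Okhrin) to extract the $t$-law; only afterwards does it establish, via a block-matrix/Sherman--Morrison identity, that the resulting variance matrix coincides with $(\Tilde\Phi^\intercal\Tilde\Phi)^{-1}\hat\sigma^2$ from the reparametrized regression. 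You instead work directly in the reparametrized regression, observe that the studentized statistic is a scale-free continuous function of $\sqrt m\,[\Tilde\Phi\mid\Tilde\Phi^0]$, push the limit through by continuous mapping, and then recognize the limit as the exact Gaussian linear model $-g^K=Z\beta^*_{1:(K-1)}+\epsilon$ with $\epsilon$ independent of $Z$ (from joint Gaussianity and the normal equations), to which the textbook conditional-OLS derivation of the multivariate $t$ applies. Your route is more elementary---it needs no Wishart machinery and no algebraic identity linking the two parametrizations---while the paper's Wishart argument delivers, in the same stroke, the more general version with an arbitrary contrast matrix $Q$ (though your argument extends to that case with essentially no change).
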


\begin{remark}[Distributional t-test]
	For this test, the null hypothesis is that the dataset $\mathcal{D}_i$ should have weight zero in the weighted empirical risk minimization. Mathematically, this corresponds to testing the null hypothesis $\beta_i^* = 0$. Thus, we consider the following hypotheses:
	\begin{equation*}
		H_0: \beta_i^* = 0, \quad H_1: \beta_i^* \neq 0.
	\end{equation*}
	The test statistic is defined as
	\begin{equation*}
		t = \frac{\hat{\beta}_i}{\sqrt{\left( \left(\Tilde{\Phi}^{\intercal}\Tilde{\Phi}\right)^{-1}\hat{\sigma}^2\right)_{ii}}}.
	\end{equation*}
	From Theorem \ref{theorem:dlm}, under the null hypothesis,
	\begin{equation*}
		t \sim t(L-K+1),
	\end{equation*}
	where $t(L-K+1)$ is a univariate $t$-distribution with $L-K+1$ degrees of freedom.
\end{remark}

\begin{remark}[Distributional F-test] For this test, the null hypothesis is  that each dataset is equally informative for the target distribution. Mathematically, this corresponds to testing the null hypothesis $\beta_i^* = 1/K$ for $i = 1,\ldots,K$. Thus, we consider the following hypotheses:
\begin{align*}
	&H_0: \, \,  \beta_1^* =  \ldots = \beta_K^* =\frac{1}{K}, \\
	&H_1: \, \,  \beta_i^* \neq \frac{1}{K}, \text{ for at least one }i.
\end{align*}
The test statistic is defined as
\begin{equation*}
	F =   \left(\hat{\beta}_{1:(K-1)} - \frac{1}{K} \cdot \bm{1}_{K-1} \right)^{\intercal}\left((\Tilde{\Phi}^{\intercal}\Tilde{\Phi})^{-1}\hat{\sigma}^2\right)^{-1} \left(\hat{\beta}_{1:(K-1)} - \frac{1}{K} \cdot \bm{1}_{K-1}\right)/(K-1).
\end{equation*}
From Theorem \ref{theorem:dlm}, under the null hypothesis, 
\begin{equation*}
	F \sim F_{K-1, L-K+1},
\end{equation*}
where $F_{K-1, L-K+1}$ is a F-distribution with degrees of freedom $K-1$ and $L-K+1$.
   
\end{remark}

\begin{remark}[Distributional confidence intervals for parameters of the target distribution]\label{remark:ci-target}
 We will now discuss how to form asymptotically valid confidence intervals for $\theta^0 = \arg \min_\theta \mathbb{E}^0[\mathcal{L}(\theta,X,Y)]$ when $m \xrightarrow[]{} \infty$ and $L \xrightarrow[]{} \infty$. Here we assume that the influence function $\phi = - \mathbb{E}^0[\partial_\theta^2 \mathcal{L}(\theta^0,X,Y)]^{-1}  \partial_\theta \mathcal{L}(\theta^0,X,Y)$ has a finite fourth moment under $\mathbb{P}^0$. Then, we can form $(1-\alpha)$-confidence intervals for $\theta^0$ with $\hat{\theta} = \arg \min_\theta \sum_k \hat \beta_k \hat{\mathbb{E}}^k[\mathcal{L}(\theta,X,Y)]$ as follows:
\begin{equation*}
    \hat \theta \pm z_{1-\alpha/2} \cdot \sqrt{\widehat{\text{Var}}_{\mathbb{P}^0}(\phi)} \cdot \sqrt{ \frac{1}{L}\sum_{\ell = 1}^{L}  \left(\hat{\mathbb{E}}^0[\phi_{\ell}] - \sum_{k=1}^{K} \hat{\beta}_k \hat{\mathbb{E}}^k[\phi_{\ell}] \right)^2 }.
\end{equation*}
 Here, $\widehat{\text{Var}}_{\mathbb{P}^0}(\phi)$ denotes the empirical variance of $$\hat \phi(X,Y) = - ( \sum_{k=1}^K \hat \beta_k \hat{\mathbb{E}}^k[\partial_\theta^2 \mathcal{L}( \hat \theta,X,Y)])^{-1}  \partial_\theta \mathcal{L}(\hat \theta,X,Y)$$ on the pooled donor data $\{\{X_{ki},Y_{ki}\}_{i=1}^{n_k}\}_{k=1}^K$. The proof of this result can be found in the Appendix, Section~\ref{sec:prediction-intervals}.
\end{remark}

\begin{remark}[Distributional sample size and degrees of freedom]
    Note that the asymptotic variance of ``distributional" regression parameters $(\tilde \Phi^\intercal \tilde \Phi)^{-1} \hat \sigma^2$ has the same algebraic form as the asymptotic variance for regression parameters in a standard Gaussian linear model. In the classical setting, the test statistics of regression parameters follow $t$-distributions with $n-p+1$ degrees of freedom, where $n$ is the sample size and $p$ is the number of covariates. In our setting, we have $L-K+1$ degrees of freedom, where $L$ corresponds to the number of test functions and $K$ corresponds to the number of donor distributions.
\end{remark}

\begin{remark}[Correlated test functions]\label{remark:correlated} We assumed that the $(\phi_1(X), \dots, \phi_L(X))$ are uncorrelated and  have unit variances under $\mathbb{P}^0$. In practice, this might not be the case. In such cases, we can apply a linear transformation $T$ to the test functions in a pre-processing step to obtain uncorrelated test functions with unit variances. We define the transformation matrix $T = (\hat{\Sigma}^{\Phi})^{-1/2}$, where $\hat{\Sigma}^{\Phi}$ is an estimate of the covariance matrix $\Sigma^{\Phi}$ of $(\phi_1(X), \dots, \phi_L(X))$ on the pooled data. From Remark~\ref{remark:var}, if $(\phi_1(X), \dots, \phi_L(X))$ have finite fourth moments, we have $\hat{\Sigma}^{\Phi} =\Sigma^{\Phi} + o_p(1)$. 
\end{remark}

\section{ACS Income Data}\label{sec:acs}

In this section, we demonstrate the effectiveness and robustness of our method when the training dataset is obtained from multiple sources. We focus on the ACS Income dataset \citep{ding2021retiring} where the goal is to predict the logarithmic value of individual income based on tabular census data.

Similar as in \citet{shen2023mo}, we consider a scenario where we initially have a limited training dataset from California (CA), and subsequently, we obtain additional training dataset from Puerto Rico (PR). We iterate through the target state among the rest of 49 states. Note that there are substantial economic, job market, and cost-of-living disparities between CA and PR. Most of the states are more similar to CA than PR. Therefore, increasing the training dataset with data sourced from PR will lead to substantial performance degradation of the prediction algorithm on the target state.

We compare three methods in our study. In the first method, we fit the XGBoost on the training dataset in the usual manner, with all samples assigned equal weights. In the second method, we employ sample-specific importance weights. The importance weights are calculated as follows. First, pool the covariates $X$ of the training dataset (CA + PR) and the target dataset. Let $A_i$ be the label which is 1 if the $i$-th sample is from the target dataset and 0 otherwise. Then, the importance weight for the $i$-th sample is obtained as
\begin{equation*}
    \hat{w}_{IW, i} = \frac{\hat{P}(X = X_i|A_i = 1)}{\hat{P}(X = X_i|A_i= 0)} = \frac{\hat{P}(A_i = 1|X = X_i)\hat{P}(A_i = 0)}{\hat{P}(A_i = 0|X = X_i)\hat{P}(A_i = 1)},
\end{equation*}
where we estimate $\hat{P}(A_i|X = X_i)$ using XGBoost on the pooled covariates data and estimate $\hat{P}(A_i=1)$ as the ratio of the sample size of the target data to the sample size of the pooled data.  

In the third method, based on our approach, the XGBoost is fitted on the training dataset, but this time samples receive different weights depending on whether they originate from CA or PR. We utilize the occupation-code feature to construct test functions. The test function $\phi_{\ell}(D_i)$ is defined as whether $i$-th sample has the $\ell$-th occupation code. The occupation code is a categorical variable representing around 500 different occupation categories. The data dictionary at ACS PUMS documentation\footnote{https://www.census.gov/programs-surveys/acs/microdata/documentation.2018.html} provides the full list of occupation codes. The number of test functions in total is therefore around 500. Then, we estimate distribution-specific-weights $(w_{\text{CA}}, w_{\text{PR}})$ by running
 \begin{equation*}
     \hat{w}_{\text{CA}}, \hat{w}_{\text{PR}} = \argmin_{ w_{\text{CA}}+w_{\text{PR}} = 1} \sum_{\ell = 1}^{L} (\hat{\mathbb{E}}^{\text{Target}}[\phi_{\ell}] - w_{\text{CA}}\hat{\mathbb{E}}^{\text{CA}}[\phi_{\ell}] - w_{\text{PR}}\hat{\mathbb{E}}^{\text{PR}}[\phi_{\ell}])^2.
 \end{equation*}
 Note that $\hat{\mathbb{E}}^{\text{state}}[\phi_{\ell}]$ is the proportion of samples of the srtate with the $\ell$-th occupation code. 

The results with the target states of Texas (TX) and Washington (WA) are given in Figure \ref{fig:ACS}. Similar patterns are observed for other target states, aligning with either TX or WA behaviors. Results for other target states can be found in the Appendix~\ref{sec:add-details}. 

For the target TX, as we include the PR data, the Mean Squared Error (MSE) initially drops for all the methods. However, as more data from PR is added, the MSE starts to increase for both methods with equal weights and importance weights. In contrast, our method demonstrates robustness, with the MSE decreasing after including PR data and staying consistently low, even when the PR source becomes dominant in the training dataset. Turning to the target WA, for both methods with equal weights and importance weights, adding data sourced from PR results in a straightforward increase in MSE. In contrast, our method assigns weights close to 0 for samples originating from PR, preventing significant performance degradation.

\section{GTEx Data}\label{sec:gtex}

In this section, we illustrate our methods using the GTEx data. We will also run model diagnostics to evaluate model fit. The GTEx V6 dataset provides RNA-seq gene-expression levels collected from 450 donors across 44 tissues. 

Treating each tissue as a separate study, we have 44 different datasets. As shown in Figure \ref{fig:gtex_pairs_plot}, some tissues exhibit higher correlations than others. To demonstrate our method, we select 5 different tissues (same as in Figure~\ref{fig:gtex_pairs_plot}): Adipose subcutaneous, Adipose visceral omentum, Brain cortex, Brain frontal cortex, and Brain cerebellum. Let brain cortex be our target tissue (the third tissue in Figure~\ref{fig:gtex_pairs_plot}). The code and the data are available as R package \texttt{dlm} at \url{https://github.com/yujinj/dlm}. 

For our test functions, we randomly sample 1000 gene-pairs and define a test function as the standardized product of gene-expression levels of a pair. In total, we have 1000 test functions. Most genes are known to be uncorrelated with each other, with the 90th percentile range of correlations between two different test functions being about $[-0.06, 0.06]$.
We employ group Lasso to estimate the sparse inverse covariance matrix. As the estimated inverse covariance matrix is close to a diagonal matrix (details provided in the Appendix~\ref{sec:add-details}), we do not perform a whitening transformation.

The R package \texttt{dlm} (\url{https://github.com/yujinj/dlm}) contains a function \texttt{dlm} that performs distributional linear regressions:
\begin{verbatim}
    dlm(formula, test.function, data, whitening)
\end{verbatim}
Here, the formula parameter specifies which model we want to fit. 
The test.function parameter represents considered test functions.
The data parameter passes a list of datasets.
The whitening parameter is a boolean indicating whether one wants to perform a whitening transformation. Additional details and descriptions of the function can be found on the GitHub page.

Below is the summary output of the \texttt{dlm} function with a model formula ``Brain cortex $\sim$ Adipose subcutaneuous + Adipose visceral omentum + Brain frontal cortex + Brain cerebellum" with 1000 test functions defined as standardized products of randomly selected gene-pairs. Note that the summary output of the \texttt{dlm} function closely resembles that of the commonly used \texttt{lm} function. 

\begin{Verbatim}
Call:
dlm(formula = Brain_Cortex ~ 
            Adipose_Subcutaneous + 
            Adipose_Visceral_Omentum + 
            Brain_Frontal_Cortex_BA9 +
            Brain_Cerebellum, 
    test.function = phi, 
    data = GTEx_data, whitening = FALSE)

Residuals:
     Min       1Q   Median       3Q      Max 
-0.56838 -0.09436 -0.00604  0.08739  0.40917 

Coefficients:
                           Estimate Std. Error t value Pr(>|t|)    
Adipose_Subcutaneous     -0.0001295  0.0304036  -0.004   0.9966    
Adipose_Visceral_Omentum  0.0462641  0.0250888   1.844   0.0655 .  
Brain_Frontal_Cortex_BA9  0.7846112  0.0184325  42.567  < 2e-16 ***
Brain_Cerebellum          0.1692543  0.0217128   7.795 1.61e-14 ***
---
Signif. codes:  0 ‘***’ 0.001 ‘**’ 0.01 ‘*’ 0.05 ‘.’ 0.1 ‘ ’ 1

Residual standard error: 0.1377 on 997 degrees of freedom
Multiple R-squared:  0.5088,	Adjusted R-squared:  0.5073 
F-statistic: 344.3 on 3 and 997 DF,  p-value: < 2.2e-16
\end{Verbatim}

The coefficient represents the estimated weight for each perturbed dataset. The sum of coefficients equals to one. The estimated weight for the brain-frontal-cortex dataset is close to 1: 0.7846112. On the other hand, the estimated weights for adipose-subcutaneous and adipose-visceral-omentum datasets are close to 0: -0.0001295 and 0.0462641. This supports the observations made in Figure~\ref{fig:gtex_pairs_plot}, where the brain-cortex dataset (target) exhibits a higher correlation with the brain-frontal-cortex dataset than others. 

Let us discuss the summary output in more detail. All tests have in common that the probability statements are with respect to distributional uncertainty, that is, the uncertainty induced by the random distributional perturbations.

\begin{enumerate}
    \item (Distributional t-test). The summary output provides a t-statistic and a p-value for each estimated weight. Based on the output, the estimated weight for the brain-frontal-cortex dataset is highly significant (with a t-statistic 42.567 and a p-value less than $2\cdot10^{-16}$). We have enough evidence to believe that the true optimal weight for the brain-frontal-cortex dataset is nonzero, and thus we conclude that the brain-frontal-cortex dataset is important for predicting the target dataset.
    On the other hand, the estimated weights for the adipose-subcutaneous and adipose-visceral-omentum datasets are not statistically significant (with t-statistics -0.004 and 1.844 and p-values 0.9966 and 0.0655). We conclude those datasets are not as important as the brain-frontal-cortex dataset for predicting the target dataset.  
    \item (Distributional F-test). The summary output provides a F-statistic and a corresponding p-value for a null hypothesis: the optimal weights are uniform ($\beta^* = (0.25, 0.25, 0.25, 0.25)$ in our example). This means that different datasets are equally informative in explaining the target dataset. The F-statistic is given as 344.3, which follows the F-distribution with degrees of freedom 3 and 997 under the null hypothesis. The resulting p-value is less than $2.2\cdot 10^{-16}$. Therefore, we reject the null hypothesis. From Figure~\ref{fig:gtex_pairs_plot} and the estimated weights, we can see that the brain-frontal-cortex and brain-cerebellum datasets are more informative in explaining the target dataset than the adipose-subcutaneous and adipose-visceral-omentum datasets.
    
    \item (R-squared). 
    Usually, the coefficient of determination (R-squared) measures how much of the variation of a target variable is explained by other explanatory variables in a regression model. In our output, R-squared measures how much of the unexplained variance of a target dataset, compared to considering uniform weights, is explained by considering our proposed weights. Specifically, it is calculated as
    \begin{equation*}
        R^2 = 1 - \frac{\text{RSS with estimated weights} }{\text{RSS with uniform weights}}.
    \end{equation*}
    R-squared is close to 0 if the estimated weights are close to uniform weights. 
\end{enumerate}
Finally, we should consider diagnostic plots to evaluate the appropriateness of the fitted distribution shift model. This can be evaluated using a distributional Tukey-Anscombe and a distributional QQ-Plot. In contrast to conventional residual plots and QQ-Plots, each point in these plot corresponds to a mean of a test function instead of a single observation. Figure~\ref{fig:gtex-diagnostic-plots} shows that neither the QQ-Plot nor the TA plot indicate a deviation from the modelling assumptions.

\begin{figure}[t!]
    \centering
    \includegraphics[scale = 0.5]{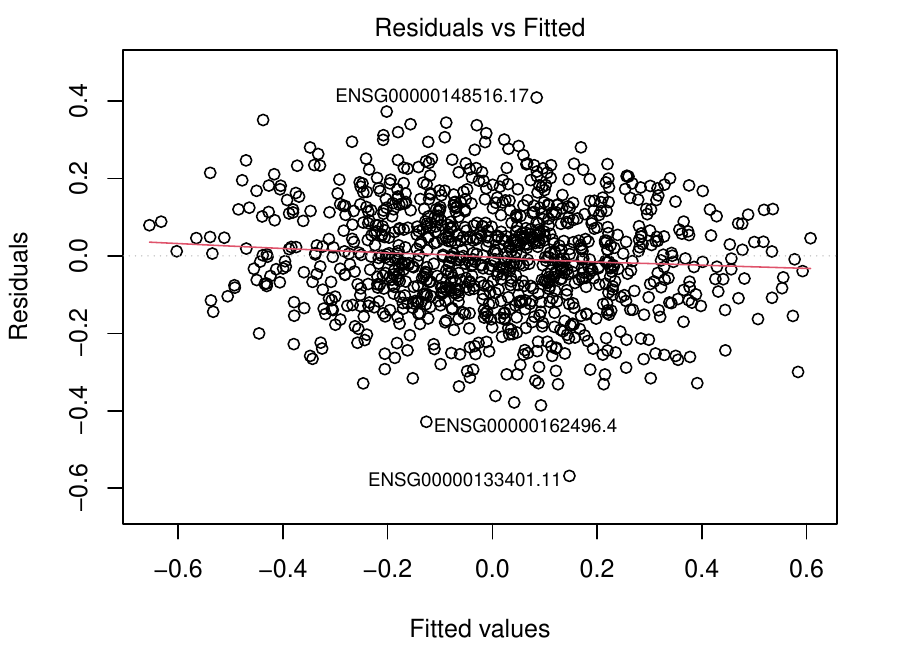}
    \includegraphics[scale = 0.5]{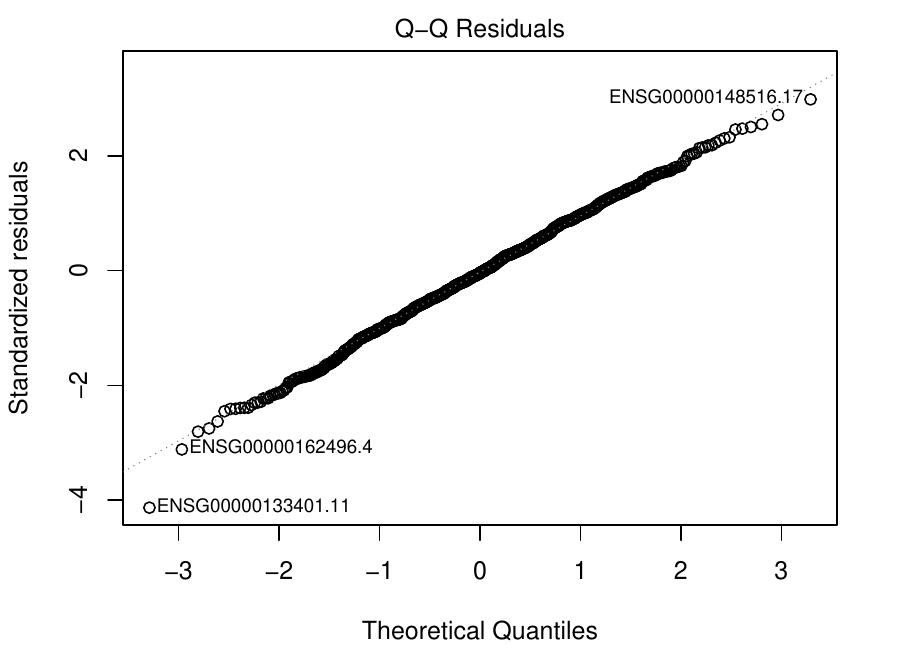}
    \caption{The distributional residual plot and distributional QQ-Plot for the GTEx data. In contrast to conventional residual plots and QQ-Plots, each point corresponds to a mean of a test function instead of a single observation.}
    \label{fig:gtex-diagnostic-plots}
\end{figure}

\section{Discussion}\label{sec:discussion}

In many practical settings, there is a distribution shift between the training and the target distribution.
Existing distribution shift models fall into two categories. One category, which we term ``sparse distribution shift", assumes that the shift affects specific parts of the data generation process while leaving other parts invariant. Existing methods attempt to re-weight training samples to match the target distribution or learn invariant representations of features. The second category considers worst-case distributional shifts. More specifically, they consider the worst-case behavior of a statistical functional over a fixed neighborhood of the model based on the distributional distance. This often requires knowledge of the strength and shape of the perturbations.

In contrast, we consider random dense distributional shifts, where the shift arises as the superposition of many small random changes. We have found that the random perturbation model can be useful in modeling empirical phenomena observed in Figure~\ref{fig:gtex_pairs_plot}. Moreover, we see that even when the overlap assumptions seem to be violated (when the reweighing approach fails), or when the distribution shift appears large in Kullback-Leibler divergence, the random perturbation model may still be appropriate and useful. Under the random distributional perturbation model, we establish foundations for transfer learning and generalization. 

Our method shares methodological similarities with synthetic controls. While synthetic controls are typically applied to panel data under the assumption of a linear factor model, our procedure can be applied to model distribution shifts of any type of data (discrete, continuous, ordinal) with or without time structure. Furthermore, our method justifies the linearity in synthetic control methods under random distributional shifts, rather than assuming a linear factor model. The random distributional shift assumption can be evaluated based on plots in Figure~\ref{fig:gtex-diagnostic-plots}. Additionally, we propose a generalization of our procedure to empirical risk minimization. 

In practice, perturbations may involve a combination of sparse and dense shifts. In such cases, hybrid approaches (sparse + dense) may be appropriate. We view hybrid models as an important generalization and leave it for future work.

A companion R package, \texttt{dlm}, is available at \url{https://github.com/yujinj/dlm}. Our package performs distributional generalization under the random perturbation model. Users can replicate the results in Section \ref{sec:gtex} with the data and code provided in the package.

\section{Acknowledgments}

The authors are grateful for insightful discussions with Naoki Egami, Kevin Guo, Ying Jin, Hongseok Namkoong, and Elizabeth Tipton. This work was supported by Stanford University’s Human-Centered Artificial Intelligence (HAI) Hoffman-Yee Grant, and by the Dieter Schwarz Foundation.

\newpage 

\bibliography{bibliography}

\newpage

\section*{Appendix}

\section{Proof of Theorem~\ref{thm:perturbation-model}}\label{sec:proofs}

\subsection{Extended Theorem~\ref{thm:perturbation-model}}\label{sec:extended-lemma}

In the following, we present the extended version of Theorem~\ref{thm:perturbation-model}. Note that we get Theorem~\ref{thm:perturbation-model} in the main text by defining $\phi(D) = (\phi_1(D), \dots, \phi_K(D))^{\intercal}$.

\begin{extended-theorem}\label{extended-theorem}
Under the assumptions of Theorem~\ref{thm:perturbation-model}, for any Borel measurable square-integrable function $\phi: \mathcal{D} \xrightarrow[]{} \mathbb{R}^L$, we have
\begin{equation}
    \sqrt{m}\left(
   \begin{pmatrix}
    \frac{1}{n_1}\sum_{i=1}^{n_1}\phi(D_{1i})\\
    \vdots \\
     \frac{1}{n_K}\sum_{i=1}^{n_K}\phi(D_{Ki})
    \end{pmatrix} - 
    \begin{pmatrix}
    \mathbb{E}^0[\phi(D)]\\
    \vdots \\
     \mathbb{E}^0[\phi(D)] 
    \end{pmatrix}\right) \xrightarrow[]{d}
     N\left(0, \Sigma^W \otimes \text{Var}_{\mathbb{P}^0}({\phi}(D))\right),
\end{equation}
where $\Sigma^W \in \mathbb{R}^{K \times K}$ is
$$(\Sigma^W)_{ij} = \frac{Cov(W^i, W^j)}{E[W^i]E[W^j]}.$$
Here, $\otimes$ denotes a kronecker product. Note that $\Sigma^W \otimes \text{Var}_{\mathbb{P}^0}({\phi}(D))$ can be written as
\begin{align*}
\setlength{\arraycolsep}{8pt}
\renewcommand{\arraystretch}{1.2}
\left[
\begin{array}{c@{}c:c:c:c}
      & \Sigma^W_{11} \cdot \text{Var}_{\mathbb{P}^0}( {\phi}(D))   &  \Sigma^W_{12} \cdot \text{Var}_{\mathbb{P}^0}( {\phi}(D))       & \cdots &    
    \Sigma^W_{1K} \cdot \text{Var}_{\mathbb{P}^0}( {\phi}(D))                \\ \hdashline
        & \Sigma^W_{21} \cdot \text{Var}_{\mathbb{P}^0}( {\phi}(D))    & \Sigma^W_{22} \cdot \text{Var}_{\mathbb{P}^0}( {\phi}(D))     & \cdots & \Sigma^W_{2K} \cdot \text{Var}_{\mathbb{P}^0}( {\phi}(D))         \\ \cdashline{1-5}
       &{\vdots}      & {\vdots}        &  &{\vdots}                 \\ \cdashline{1-5}
   & \Sigma^W_{K1} \cdot \text{Var}_{\mathbb{P}^0}( {\phi}(D))       & \Sigma^W_{K2} \cdot \text{Var}_{\mathbb{P}^0}( {\phi}(D))        & \cdots     &  \Sigma^W_{KK} \cdot \text{Var}_{\mathbb{P}^0}( {\phi}(D))      \\
\end{array}\right].
\end{align*}
\end{extended-theorem}

\quad\\
In the following, we prove Theorem~\ref{thm:perturbation-model}. From the proof, we can get Extended Theorem~\ref{extended-theorem} by considering 
\begin{equation*}
    (\hat{\mathbb{E}}^1[\phi_1(D)],\dots, \hat{\mathbb{E}}^1[\phi_L(D)], \hat{\mathbb{E}}^2[\phi_1(D)], \dots, \hat{\mathbb{E}}^2[\phi_L(D)], \dots \dots, \hat{\mathbb{E}}^K[\phi_1(D)], \dots, \hat{\mathbb{E}}^K[\phi_L(D)]),
\end{equation*}
where $\phi(D) = (\phi_1(D), \dots, \phi_L(D))^{\intercal} \in \mathbb{R}^L$.

\subsection{Auxiliary Lemma for Theorem~\ref{thm:perturbation-model}}

Let us first state an auxiliary lemma that will turn out helpful for proving Theorem~\ref{thm:perturbation-model}.

\begin{lemma}\label{lemma:pert_asymp}
Let the assumptions of Theorem~\ref{thm:perturbation-model} hold. For any bounded measurable $\phi_1, \dots, \phi_K: \mathcal{D} \xrightarrow[]{} \mathbb{R}$, we have that
\begin{equation*}
   \sqrt{m}\left(
   \begin{pmatrix}
    \mathbb{E}^{1}[\phi_1(D)]\\
    \vdots \\
     \mathbb{E}^{k}[\phi_K(D)]
    \end{pmatrix} - 
    \begin{pmatrix}
    \mathbb{E}^0[\phi_1(D)]\\
    \vdots \\
     \mathbb{E}^0[\phi_K(D)] 
    \end{pmatrix}\right)\xrightarrow{d}
     N\left(0, \Sigma^W \odot \text{Var}(\overrightarrow{\phi}(D))\right)
\end{equation*}
where $(\Sigma^W)_{ij} = \frac{Cov(W^i, W^j)}{E[W^i]E[W^j]}$ and $\overrightarrow{\phi}(D) = (\phi_1(D), \dots, \phi_K(D))^{\intercal}$.
\end{lemma}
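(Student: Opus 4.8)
\textbf{Proof proposal for Lemma~\ref{lemma:pert_asymp}.}

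The plan is to compute the conditional mean $\mathbb{E}^k[\phi_j(D)]$ explicitly in terms of the weights $W^k_1,\dots,W^k_m$ and then apply a multivariate central limit theorem over the bins. First I would reduce to the case where $D = U$ is uniform on $[0,1]$ under $\mathbb{P}^0$, using the representation $D \stackrel{d}{=} h(U)$ and writing each $\phi_j(h(U))$ as a bounded measurable function of $U$; this costs nothing since the construction of $\mathbb{P}^k$ is defined through the bins of $U$. With this reduction, by definition of $\mathbb{P}^k$ we have
\begin{equation*}
    \mathbb{E}^k[\phi_j(D)] = \sum_{l=1}^m \frac{W^k_l}{\frac1m\sum_{r=1}^m W^k_r} \cdot \mathbb{E}^0[\phi_j(D)\mathbb{1}_{U\in I_l}] = \frac{\sum_{l=1}^m W^k_l \, a_{jl}}{\frac1m\sum_{r=1}^m W^k_r},
\end{equation*}
where $a_{jl} := \mathbb{E}^0[\phi_j(D)\mathbb{1}_{U\in I_l}]$, so that $\sum_l a_{jl} = \mathbb{E}^0[\phi_j(D)]$ and $m\, a_{jl} = \mathbb{E}^0[\phi_j(D)\mid U\in I_l]$ is a bin-average that converges to a limit as $m\to\infty$ along Lebesgue points.

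Next I would center: writing $\mu_j := \mathbb{E}^0[\phi_j(D)]$ and $\bar W^k := \frac1m\sum_r W^k_r$,
\begin{equation*}
    \mathbb{E}^k[\phi_j(D)] - \mu_j = \frac{1}{\bar W^k}\sum_{l=1}^m \left(a_{jl} - \frac{\mu_j}{m}\right) W^k_l = \frac{1}{m\bar W^k}\sum_{l=1}^m \big(\tilde a_{jl}\big)\big(W^k_l - E[W^k_l]\big),
\end{equation*}
where $\tilde a_{jl} := m a_{jl} - \mu_j = \mathbb{E}^0[\phi_j(D) - \mu_j \mid U\in I_l]$ and I use that $\sum_l \tilde a_{jl}/m \to 0$ together with $E[W]$ being common across bins to absorb the deterministic part. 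By the law of large numbers $\bar W^k \to E[W] =: \mu_W$ a.s., so up to $o_P$ terms the quantity of interest is $\frac{1}{m\mu_W}\sum_l \tilde a_{jl}(W^k_l - \mu_W)$, and multiplying by $\sqrt m$ gives $\frac{1}{\mu_W}\cdot\frac{1}{\sqrt m}\sum_{l=1}^m \tilde a_{jl}(W^k_l - \mu_W)$. Stacking over $k=1,\dots,K$ (and $j$), this is a normalized sum of $m$ independent (across $l$) mean-zero random vectors whose $l$-th summand has covariance involving $\tilde a_{il}\tilde a_{jl}\,\mathrm{Cov}_P(W^i,W^j)$. I would verify the Lindeberg condition using boundedness of $\phi_j$ (hence of $\tilde a_{jl}$, uniformly in $l$ and $m$) and the finite-variance assumption on the weights, and compute the limiting covariance as
\begin{equation*}
    \lim_{m\to\infty}\frac{1}{m}\sum_{l=1}^m \frac{\tilde a_{il}\tilde a_{jl}}{\mu_W^2}\,\mathrm{Cov}_P(W^i,W^j) = \frac{\mathrm{Cov}_P(W^i,W^j)}{\mu_W^2}\cdot \mathbb{E}^0[(\phi_i(D)-\mu_i)(\phi_j(D)-\mu_j)],
\end{equation*}
where the limit of the bin-average $\frac1m\sum_l \tilde a_{il}\tilde a_{jl}$ is identified as $\mathrm{Cov}_{\mathbb{P}^0}(\phi_i(D),\phi_j(D))$ by a martingale/Lebesgue-differentiation argument (the bin conditional expectations converge in $L^2(\mathbb{P}^0)$ to $\phi_j - \mu_j$ as $m\to\infty$). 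This matches $\Sigma^W \odot \mathrm{Var}_{\mathbb{P}^0}(\overrightarrow\phi(D))$ entrywise.

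The main obstacle I anticipate is the rigorous justification of the bin-average convergence $\frac1m\sum_l \tilde a_{il}\tilde a_{jl} \to \mathrm{Cov}_{\mathbb{P}^0}(\phi_i,\phi_j)$ and the accompanying $o_P$ control: one must show that replacing $\phi_j$ by its piecewise-constant bin-conditional-expectation approximation incurs negligible error after the $\sqrt m$ scaling, and that this holds jointly with the CLT over bins. The clean way is to note $m a_{jl} = \mathbb{E}^0[\phi_j\mid I_l]$ is exactly the conditional expectation given the $\sigma$-algebra generated by the $m$-bin partition, that this sequence is an $L^2$-bounded (reverse) martingale converging to $\phi_j$ in $L^2(\mathbb{P}^0)$, and then apply a triangular-array CLT (e.g.\ Lindeberg--Feller) with the covariance computed via this $L^2$ convergence. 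The remaining steps — Slutsky to handle $\bar W^k \to \mu_W$, and the passage from bounded $\phi_j$ to general square-integrable $\phi_j$ in the full Theorem~\ref{thm:perturbation-model} via truncation — are routine. Everything else (the sampling step, i.e.\ replacing $\mathbb{E}^k$ by $\hat{\mathbb{E}}^k$) is deferred to the proof of Theorem~\ref{thm:perturbation-model} itself, where the regime $m/\min_k n_k \to 0$ makes the extra sampling noise of lower order.
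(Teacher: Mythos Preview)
Your proposal is correct and matches the paper's proof essentially step for step: reduce to $U$ uniform on $[0,1]$, write $\sqrt m\,(\mathbb{E}^k[\phi_k]-\mathbb{E}^0[\phi_k])$ as $(\bar W^k)^{-1}$ times a triangular-array sum $\frac{1}{\sqrt m}\sum_l \tilde a_{kl}(W^k_l - E[W^k])$, verify Lindeberg via boundedness of the $\phi_k$, identify the limiting covariance through the $L^2$ convergence $\Pi_m\phi_k\to\phi_k$, and finish with Slutsky for $\bar W^k\to E[W^k]$. One small caveat on your ``main obstacle'': the $m$-bin partitions for general $m\to\infty$ do not refine one another, so $(\Pi_m\phi)_m$ is not literally a (reverse) martingale; the $L^2$ convergence you need still holds (approximate bounded $\phi$ by continuous functions and use that $\Pi_m$ is an $L^2$ contraction), and this is exactly how the paper phrases the step.
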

\begin{proof}
Let $\psi_{k} = \phi_{k} \circ h$.
Without loss of generality, assume that $\mathbb{E}^0[\psi_{k}(U)] = 0$ for $k = 1, \dots, K$. Note that
\begin{equation*}
    \sqrt{m}(\mathbb{E}^{k}[\psi_{k}(U)] - \mathbb{E}^0[\psi_{k}(U)]) = \frac{\sqrt{m}\sum_{j=1}^{m}\int_{x \in I_j} \psi_{k}(x) dx \cdot (W^k_j - E[W^k])}{\sum_{j=1}^{m}W^k_j/m}. 
\end{equation*}
Let 
\begin{equation*}
    Y_{m,j} := \sqrt{m} \begin{pmatrix}
    \int_{x \in I_j} \psi_1(x) dx \cdot (W^1_j - E[W^1]) \\
    \vdots \\
    \int_{x \in I_j} \psi_K(x) dx \cdot (W^K_j - E[W^K])
    \end{pmatrix}.
\end{equation*}
First, note that 
\begin{equation}\label{eq:cltcond1}
    E[Y_{m,j}] = 0
\end{equation}
for all $j$. As the second step, we want to show that 
\begin{equation}\label{eq:cltcond2}
    \sum_{j=1}^{m}\text{Cov}(Y_{m,j})\xrightarrow[]{} \text{Var}(\overrightarrow{W})\odot \text{Var}_{\mathbb{P}^0}(\overrightarrow{\psi}(U)),
\end{equation}
where $\overrightarrow{W} = (W^1, \dots, W^K)^{\intercal}$, $\overrightarrow{\psi}(U) = (\psi_1(U), \dots, \psi_K(U))^{\intercal}$, and $\odot$ is an element-wise multiplication. For any $f \in L^2([0,1])$, define $\Pi_m(f)$ as
\begin{equation*}
    \Pi_m(f)(x) = \sum_{j=1}^{m} \left(m\int_{x \in I_j} f(x) dx\right) \cdot I(x \in I_j).
\end{equation*}
Then, we have
\begin{align*}
    & \Bigg| \left(\sum_{j=1}^{m}\text{Cov}(Y_{m,j}) -\text{Var}(\overrightarrow{W})\odot \text{Var}_{\mathbb{P}^0}(\overrightarrow{\psi}(U))\right)_{kk'}\Bigg| \\
    & \quad\quad = \Bigg| \left(m \sum_{j=1}^{m} \int_{x \in I_j} \psi_k(x) dx \int_{x \in I_j}\psi_{k'}(x)dx - \sum_{j=1}^{m}\int_{x \in I_j}\psi_k(x)\psi_{k'}(x)dx \right) \cdot \text{Cov}(W^k, W^{k'})\Bigg|  \\
    & \quad\quad \leq ||\psi_k||_2||\psi_{k'} - \Pi_m(\psi_{k'})||_2 \cdot |\text{Cov}(W^k, W^{k'})|\xrightarrow[]{} 0
\end{align*}
for $1 \leq k, k' \leq K$ as $m$ goes to infinity. This is because any bounded function can be approximated by a sequence of step functions of the form $\sum_{j=1}^{m}b_jI(x\in I_j)$.
Next we will show that for any $\epsilon > 0$,
\begin{equation}\label{eq:cltcond3}
    g_m(\epsilon) = \sum_{j=1}^{m}E[||Y_{m,j}||_2^2 ; ||Y_{m,j}||_2 \geq \epsilon] \xrightarrow[]{}0. 
\end{equation}
Let $||\psi_1||_{\infty},\dots, ||\psi_K||_{\infty} \leq B$. Then this is implied by the dominated convergence theorem as
\begin{align}
\sum_{j=1}^{m}E[||Y_{m,j}||_2^2 ; ||Y_{m,j}||_2 \geq \epsilon ] & \leq  B^2 E[||\overrightarrow{W}- E[\overrightarrow{W}]||_2^2 I(B||\overrightarrow{W}- E[\overrightarrow{W}]||_2/\sqrt{m} \geq \epsilon)]  \xrightarrow[]{} 0.
\end{align}
Combining equations~\eqref{eq:cltcond1}, \eqref{eq:cltcond2}, and \eqref{eq:cltcond3}, we can apply multivariate Lindeberg's CLT.  Together with Slutsky's theorem, we have
\begin{align*}
   \sqrt{m}\left(\begin{pmatrix}
    \mathbb{E}^{1}[\phi_1(D]\\
    \vdots \\
     \mathbb{E}^{k}[\phi_K(D)]
    \end{pmatrix} - 
    \begin{pmatrix}
    \mathbb{E}^0[\phi_1(D]\\
    \vdots \\
     \mathbb{E}^0[\phi_K(D)] 
    \end{pmatrix}\right) & = \sum_{j=1}^m Y_{m,j} \odot \begin{pmatrix}
        1/(\sum_{j=1}^m W_j^1/m)\\
        \vdots \\
        1/(\sum_{j=1}^m W_j^K/m)
    \end{pmatrix}
    \\
    & \xrightarrow[]{d}
     N\left(0, \Sigma^W \odot \text{Var}_{\mathbb{P}^0}(\overrightarrow{\phi}(D))\right)
\end{align*}
where $\Sigma^W_{ij} = \frac{Cov(W^i, W^j)}{E[W^i]E[W^j]}$. This completes the proof. 
\end{proof}

\subsection{Proof of Theorem~\ref{thm:perturbation-model}}

\begin{proof}
    For any $\phi_k \in L^2(\mathbb{P})$ and for any given $\epsilon >0$, there exits a bounded function $\phi_k^B$ such that $\mathbb{E}^0[\phi_k(D)] = \mathbb{E}^0[\phi_k^B(D)]$ and $||\phi_k - \phi_k^B||_{L^2(\mathbb{P}^0)} < \epsilon$ for $k = 1, \dots, K$. Then,
    \allowdisplaybreaks
    \begin{align*}
       &\sqrt{m}\left(
   \begin{pmatrix}
    \frac{1}{n_1}\sum_{i=1}^{n_1}\phi_1(D_{1i})\\
    \vdots \\
     \frac{1}{n_K}\sum_{i=1}^{n_K}\phi_K(D_{Ki})
    \end{pmatrix} - 
    \begin{pmatrix}
    \mathbb{E}^0[\phi_1(D]\\
    \vdots \\
     \mathbb{E}^0[\phi_K(D)] 
    \end{pmatrix}\right) \\ &\quad\quad\quad\quad\quad\quad\quad\quad\quad =
        \sqrt{m}\left(
   \begin{pmatrix}
    \frac{1}{n_1}\sum_{i=1}^{n_1}\phi_1(D_{1i})\\
    \vdots \\
     \frac{1}{n_K}\sum_{i=1}^{n_K}\phi_K(D_{Ki})
    \end{pmatrix} - 
    \begin{pmatrix}
    \mathbb{E}^{1}[\phi_1(D]\\
    \vdots \\
     \mathbb{E}^{k}[\phi_K(D)] 
    \end{pmatrix}\right) \label{eq:a}\tag{a} \\& \quad\quad\quad\quad\quad\quad\quad\quad\quad
    + \sqrt{m}\left(\begin{pmatrix}
    \mathbb{E}^{1}[\phi_1(D)]\\
    \vdots \\
     \mathbb{E}^{k}[\phi_K(D)]
    \end{pmatrix} - 
    \begin{pmatrix}
    \mathbb{E}^{1}[\phi_1^B(D)]\\
    \vdots \\
     \mathbb{E}^{k}[\phi_K^B(D)] 
    \end{pmatrix}\right) \label{eq:b}\tag{b}
    \\& \quad\quad\quad\quad\quad\quad\quad\quad\quad
    + \sqrt{m}\left(\begin{pmatrix}
    \mathbb{E}^{1}[\phi_1^B(D)]\\
    \vdots \\
     \mathbb{E}^{k}[\phi_K^B(D)]
    \end{pmatrix} - 
    \begin{pmatrix}
    \mathbb{E}^0[\phi_1^B(D)]\\
    \vdots \\
     \mathbb{E}^0[\phi_K^B(D)] 
    \end{pmatrix}\right). \label{eq:c}\tag{c}
    \end{align*}
    Note that for any $\epsilon' > 0$ and for $k = 1, \dots, K$, 
    \begin{align*}
        &P\left(\sqrt{m} \Bigg| \frac{1}{n_k}\sum_{i=1}^{n_k}\phi_k(D_{ki}) - \mathbb{E}^{k}[\phi_k(D)] \Bigg| > \epsilon' \right) \\ & = E\left(P\left(\Bigg| \frac{1}{n_k}\sum_{i=1}^{n_k}\phi_k(D_{ki}) - \mathbb{E}^{k}[\phi_k(D)] \Bigg| > \frac{\epsilon'}{\sqrt{m}} \text{ } \Bigg | \text{  } W_k\right)\right) \\
        & \leq E\left( \frac{\mathbb{E}^{k}[\phi_k^2(D)]}{\epsilon'^2} \right) \cdot \frac{m}{n_k} = \frac{\mathbb{E}^0[\phi_k^2(D)]}{\epsilon'^2} \cdot \frac{m}{n_k} \xrightarrow[]{} 0,
    \end{align*}
    as the distributional uncertainty is of higher order than the sampling uncertainty. 
    Therefore, \eqref{eq:a} $ = o_p(1)$. Next let us investigate \eqref{eq:b}.  Recall that with $\psi_k = (\phi_k - \phi_k^B)\circ h$,
    \begin{align*}
       \sqrt{m}(\mathbb{E}^{k}[\phi_k(D)] - \mathbb{E}^{k}[\phi_k^B(D)]) &= \frac{\sqrt{m}\sum_{j=1}^{m}\int_{x \in I_j} \psi_k(x) dx \cdot (W_j^k - E[W^k])}{\sum_{j=1}^{m}W_j^k/m}.
    \end{align*}
    The variance of the numerator is bounded as
    \begin{align*}
        \text{Var}(W^k) \sum_{j=1}^{m}m\left(\int_{x \in I_j}\psi_k(x)dx\right)^2 
        & \leq \text{Var}(W^k) \sum_{j=1}^{m}\int_{x \in I_j}\psi_k^2(x)dx \\
        & = \text{Var}(W^k) \mathbb{E}^0[\psi_k^2(U)] \\
        & < \text{Var}(W^k) \cdot \epsilon^2
    \end{align*}
    where the first inequality holds by Jensen's inequality with $m \int_{x\in I_j} dx = 1$.  The denominator converges in probability to $E[W^k]$. Therefore, we can write
    \begin{equation*}
         \sqrt{m}(\mathbb{E}^{k}[\phi_k(D)] - \mathbb{E}^{k}[\phi_k^B(D)]) = \frac{\sqrt{m}\sum_{j=1}^{m}\int_{x \in I_j} \psi_k(x) dx \cdot (W_j^k - E[W^k])}{E[W^k]} + s_{n,k}
    \end{equation*}
    where $s_{n,k }$ is $o_p(1)$. Combining results, for $s_n = (s_{n,1}, \dots, s_{n, K})^{\intercal}$, we have that $E[ \eqref{eq:b} - s_{n}]=0$ and
    \begin{equation*}
        ||\text{Var}_P(\eqref{eq:b} - s_{n})||_{\infty} \leq \frac{||\text{Var}(\overrightarrow{W})||_{\infty}}{\min_k E^2[W^k]} \cdot \epsilon^2.
    \end{equation*}
    From Lemma \ref{lemma:pert_asymp}, we know that 
    \begin{equation*}
        \eqref{eq:c} \xrightarrow[]{d} N\left(0, \Sigma^W \odot \text{Var}_{\mathbb{P}^0}\left(\overrightarrow{\phi^B}(D)\right)\right).
    \end{equation*} 
     Let $||\phi_1^B||_2,\dots, ||\phi_K^B||_2, ||\phi_1||_2,\dots, ||\phi_K||_2 \leq M$. Note that for any $1 \leq k, k' \leq K$,
    \begin{align*}
        2|\mathbb{E}^0[\phi_k\phi_{k'}] - \mathbb{E}^0[\phi_k^B\phi_{k'}^B]| & \leq  |\mathbb{E}^0[(\phi_k - \phi_k^B)(\phi_{k'} + \phi_{k'}^B)]| + |\mathbb{E}^0[(\phi_k + \phi_k^B)(\phi_{k'} - \phi_{k'}^B)]| \\ & \leq 2M(||\phi_k - \phi_k^B||_2  + ||\phi_{k'} - \phi_{k'}^B||_2  ) \leq 4M\cdot \epsilon.
    \end{align*}
    Therefore,
    \begin{equation*}
        \Big|\Big|\text{Var}_{\mathbb{P}^0}(\overrightarrow{\phi}(D)) - \text{Var}_{\mathbb{P}^0}(\overrightarrow{\phi^B}(D))\Big|\Big|_{\infty} \leq 2M\cdot\epsilon.
    \end{equation*}
    Combining results, we have
    \begin{align*}
        \eqref{eq:a} +  \eqref{eq:b} +  \eqref{eq:c} &\stackrel{d}{=} 
        N\left(0, \Sigma^W \odot \text{Var}_{\mathbb{P}^0}(\overrightarrow{\phi}(D))\right) + o_p(1)  \\
        &+ (\eqref{eq:b} - s_n) + N\left(0, \Sigma^W \odot \left(\text{Var}_{\mathbb{P}^0}(\overrightarrow{\phi}(D)) - \text{Var}_{\mathbb{P}^0}(\overrightarrow{\phi^B}(D))\right)\right) \\
        & =  N\left(0, \Sigma^W \odot \text{Var}_{\mathbb{P}^0}(\overrightarrow{\phi}(D))\right) + O_p(\epsilon) + o_p(1). 
    \end{align*}
    Note that all the results hold for arbitrary $\epsilon > 0$. This completes the proof. 
\end{proof}

\subsection{Random target distribution}\label{sec:remark}

From Extended Theorem~\ref{extended-theorem} with $K+1$ datasets and $\Sigma^W \in \mathbb{R}^{(K+1)\times (K+1)}$, and considering that the sampling uncertainty is of low order than distributional uncertainty, for any Borel measurable square-integrable function $\phi: \mathcal{D} \xrightarrow[]{} \mathbb{R}^L$, we have
\begin{equation}
    \sqrt{m}\left(
   \begin{pmatrix}
    \frac{1}{n_1}\sum_{i=1}^{n_1}\phi(D_{1i})\\
    \vdots \\
     \frac{1}{n_K}\sum_{i=1}^{n_K}\phi(D_{Ki})
    \end{pmatrix} - 
    \begin{pmatrix}
    \mathbb{E}^{K+1}[\phi(D)]\\
    \vdots \\
     \mathbb{E}^{K+1}[\phi(D)] 
    \end{pmatrix}\right) \xrightarrow[]{d}
     N\left(0, (A\Sigma^W A^{\intercal}) \otimes  \text{Var}_{\mathbb{P}^0}({\phi}(D))\right),
\end{equation}
where  
\begin{equation*}
    A = \begin{pmatrix}
        1 & 0 & \dots & 0 & -1\\
         0 & 1 & \dots & 0 & -1\\
         \vdots & \vdots & & \vdots & \vdots\\
         0 & 0 & \dots & 1 & -1
    \end{pmatrix}  \in \mathbb{R}^{K\times (K+1)}.
\end{equation*}
Therefore, with $\mathbb{P}^{K+1}$ as a new target distribution, we still have Extended Theorem~\ref{thm:perturbation-model} but with a different distributional covariance matrix $A\Sigma^WA^{\intercal}$.

\subsection{Proof of Remark~\ref{remark:var}}\label{sec:remark-2}

Note that $\widehat{\text{Var}}_{\mathbb{P}^0}(\phi(D))$ is an empirical variance of $\phi$ on the pooled donor data $\{(D_{ki})_{i=1}^{n_k}\}_{k=1}^K$ and can be written as 
\begin{align*}
    \widehat{\text{Var}}_{\mathbb{P}^0}(\phi(D)) 
    & =   \sum_{k=1}^{K}\frac{n_k}{n}\frac{1}{n_k}\sum_{i=1}^{n_k}\phi(D_{ki})\phi(D_{ki})^{\intercal} \\ &- \left(\sum_{k=1}^{K}\frac{n_k}{n}\frac{1}{n_k}\sum_{i=1}^{n_k}\phi(D_{ki})\right)\left(\sum_{k=1}^{K}\frac{n_k}{n}\frac{1}{n_k}\sum_{i=1}^{n_k}\phi(D_{ki})\right)^{\intercal}.
\end{align*} 
where $n = n_1 + \dots + n_K$. 
As $\phi$ has a finite fourth moment,
by Theorem~\ref{thm:perturbation-model}, we have that 
\begin{align*}
    \widehat{\text{Var}}_{\mathbb{P}^0}(\phi(D))  &= \sum_{k=1}^K \frac{n_k}{n}\left( \mathbb{E}^0[\phi(D)\phi(D)^{\intercal}] + o_p(1) \right) \\ &- \left(\sum_{k=1}^K \frac{n_k}{n}\left( \mathbb{E}^0[\phi(D)] + o_p(1)\right)\right)\left(\sum_{k=1}^K \frac{n_k}{n}\left( \mathbb{E}^0[\phi(D)] + o_p(1)\right)\right)^{\intercal} \\
    & = \text{Var}_{\mathbb{P}^0}(\phi(D)) + o_p(1).
\end{align*}

\section{Special cases of the random perturbation model.}\label{sec:special-cases}
We present special cases of the random perturbation model to help develop some intuition about potential applications.

\begin{example}[Random shift across time]
Consider the case where $W_j^k = W_j^{k-1} + \epsilon_{k}$, where the $\epsilon_{k}$ have mean zero and are uncorrelated for $k=1,\ldots,K$. This can be used to model random distributional shifts across time. For example, at some time point, we might see (randomly) higher or lower percentage of old patients at a hospital. Similarly, one could also have an auto-regressive model $W_j^k = \frac{1}{2} W_j^{k-1} + \frac{1}{2} W_j^{k-2} + \epsilon_{k}$. Please note that these auto-regressive models are defined on the weights of the distribution and not the data itself. This allows us to model random distributional shift for different data modalities such as discrete, ordinal, or continuous sample spaces, with a single framework.
\end{example}

\begin{example}[Independent shifts and meta analysis]
  If the distributional perturbations $W_j^k$ are i.i.d.\ across $k=1,\ldots,K$, then $\Sigma^W$ is a diagonal matrix. Thus, for any $\phi \in L^2(\mathbb{P}^0)$, $\hat \theta_k := \frac{1}{n_k}\sum_{i=1}^{n_k}\phi(D_{ki})$ are asymptotically Gaussian and uncorrelated across $k=1,\ldots,K$. This implies a meta-analytic model
  \begin{equation}
      \hat \theta_k \stackrel{\text{i.i.d.}}{\approx} N(\theta^0,\tau^2),
  \end{equation}
where $\tau^2 = \frac{\Sigma_{kk}^W}{m} \text{Var}_{\mathbb{P}^0}(\phi)$ and $\theta^0 := \mathbb{E}^0[\phi(D)]$. Thus, in this special case, the random perturbation model justifies a random effect meta-analysis. 
\end{example}

\begin{example}[Random shift between locations]
In general, the meta-analytic model might not hold since the distributional perturbations might not be independent or identically distributed across $k=1,\ldots,K$. As an example, data from California might be similar to data from Washington, but very different from data from Pennsylvania. In the synthetic control literature, it is popular to model data distributions from different locations as (approximate) mixtures of each other. For example, data from California might be well approximated by a mixture of Washington and Nevada. In our framework, this can be expressed as
\begin{equation}
    W^{\text{California}} = .7 \cdot W^{\text{Washington}} + .3 \cdot W^{\text{Nevada}} + \epsilon,
\end{equation}
where $\epsilon$ is mean-zero noise.
\end{example}

\section{Proof of Lemma~\ref{lemma:ooderror}}\label{sec:ood-error-proof}

\begin{proof}
The first part of proof follows \cite{van2000asymptotic}, Theorem 5.41 with 
\begin{align*}
    \Psi_n(\theta) &= \sum_{k=1}^K \beta_k \hat{\mathbb{E}}^k[\partial_\theta \mathcal{L}(\theta, D)] = \sum_{k=1}^K \beta_k \frac{1}{n_k} \sum_{i=1}^{n_k} \partial_\theta \mathcal{L}(\theta,D_{ki}), \\
    \dot{\Psi}_n(\theta) &= \sum_{k=1}^K \beta_k \frac{1}{n_k} \sum_{i=1}^{n_k} \partial_\theta^2 \mathcal{L}(\theta,D_{ki}).
\end{align*}
By Taylor's theorem there exists a (random vector) $\tilde{\theta}$ on the line segment between $\theta^0$ and $\hat{\theta}^{\beta}$ such that 
\begin{equation*}
    0 = \Psi_n(\hat{\theta}) = \Psi_n(\theta^0) +  \dot{\Psi}_n(\theta^0) (\hat{\theta}^{\beta} - \theta^0)+ \frac{1}{2}(\hat{\theta}^{\beta} - \theta^0)^{\intercal} \Ddot{\Psi}_n(\tilde \theta) (\hat{\theta}^{\beta} - \theta^0).
\end{equation*}
By Extended Theorem~\ref{extended-theorem} with $\phi(D) = \partial_\theta^2 \mathcal{L}(\theta^0,D)$, we have
\begin{align}
    \dot{\Psi}_n(\theta^0)  &= \sum_{k=1}^K\beta_k\frac{1}{n_k} \sum_{i=1}^{n_k} \partial_\theta^2 \mathcal{L}(\theta^0,D_{ki}) \nonumber \\ &= \sum_{k=1}^K \beta_k \mathbb{E}^0[ \partial_\theta^2 \mathcal{L}(\theta^0,D) ] + o_P(1) = \mathbb{E}^0[ \partial_\theta^2 \mathcal{L}(\theta^0,D) ] + o_P(1). \label{eq:lemma1-pf-1}
\end{align}
By assumption, there exists a ball $B$ around $\theta^0$ such that $\theta \xrightarrow[]{} \partial_\theta^3 \mathcal{L}(\theta,D)$ is dominated by a fixed function $h(\cdot)$ for every $\theta \in B$. The probability of the event $\{\hat{\theta}^{\beta} \in B\}$ tends to 1. On this event,
\begin{equation*}
    \|\Ddot{\Psi}_n(\tilde \theta)\| = \Big|\Big| \sum_{k=1}^K\beta_k\frac{1}{n_k} \sum_{i=1}^{n_k} \partial_\theta^3 \mathcal{L}(\tilde \theta,D_{ki})\Big|\Big| \leq \sum_{k=1}^K|\beta_k|\frac{1}{n_k} \sum_{i=1}^{n_k} h(D_{ki}).
\end{equation*}
Using Extended Theorem~\ref{extended-theorem} with $\phi(D) = h(D)$, we get
\begin{equation}\label{eq:lemma1-pf-2}
       \| \Ddot{\Psi}_n(\tilde \theta) \|   \le \sum_{k=1}^K|\beta_k|\frac{1}{n_k} \sum_{i=1}^{n_k} h(D_{ki}) = O_P(1).
\end{equation}
Combining \eqref{eq:lemma1-pf-1} and \eqref{eq:lemma1-pf-2} gives us
\begin{align*}
    -\Psi_n(\theta^0) & = \left(\mathbb{E}^0[ \partial_\theta^2 \mathcal{L}(\theta^0,D) ] + o_P(1) + \frac{1}{2}(\hat{\theta}^{\beta} - \theta^0)\text{ }O_P(1)\right)(\hat{\theta}^{\beta} - \theta^0) \\ &= \left(\mathbb{E}^0[ \partial_\theta^2 \mathcal{L}(\theta^0,D) ] + o_P(1) \right)(\hat{\theta}^{\beta} - \theta^0).
\end{align*}
The probability that the matrix $\mathbb{E}^0[ \partial_\theta^2 \mathcal{L}(\theta^0,D) ] + o_P(1) $ is invertible tends to 1.  Multiplying the preceding equation by $\sqrt{m}$ and applying $(\mathbb{E}^0[ \partial_\theta^2 \mathcal{L}(\theta^0,D) ] + o_P(1))^{-1} $ left and right gives us that
\begin{equation*}
    \sqrt{m}\left(\hat{\theta}^{\beta} - \theta^0\right) = \sqrt{m}\left(\sum_{k=1}^K \beta_k \hat{\mathbb{E}}^k[\phi(D)] - \mathbb{E}^0[\phi(D)]\right) + o_P(1),
\end{equation*}
where $\phi(D) = - \mathbb{E}^0[\partial_\theta^2 \mathcal{L}(\theta^0,D)]^{-1}  \partial_\theta \mathcal{L}(\theta^0,D)$. By Extended Theorem~\ref{extended-theorem}, we have
\begin{equation}\label{eq:theta-asymp}
    \sqrt{m}\left(\hat{\theta}^{\beta} - \theta^0\right) \xrightarrow[]{d} N(0, \beta^{\intercal}\Sigma^W \beta \text{ }\cdot \text{Var}_{\mathbb{P}^0}(\phi(D)))
\end{equation}
Now let $\Phi(\theta) =  \mathbb{E}^0[\partial_{\theta}\mathcal{L}(\theta, D)], \dot{\Phi}(\theta) = \mathbb{E}^0[\partial_{\theta}^2\mathcal{L}(\theta, D)], \Ddot{\Phi}(\theta) = \mathbb{E}^0[\partial_{\theta}^3\mathcal{L}(\theta, D)]$. By Taylor's theorem, there exist $\tilde{\theta}, \tilde{\tilde{\theta}}$,  on the line segment between $\theta^0$ and $\hat{\theta}^{\beta}$ such that 
\begin{align*}
     \mathbb{E}^0[\mathcal{L}(\hat \theta^\beta,D)]  -  \mathbb{E}^0[\mathcal{L}( \theta^0,D)] &= (\hat{\theta}^{\beta} - \theta^0)^{\intercal}{\Phi}(\theta^0)  + \frac{1}{2}(\hat{\theta}^{\beta} - \theta^0)^{\intercal} \dot{\Phi}(\tilde \theta) (\hat{\theta}^{\beta} - \theta^0) \\
     &= \frac{1}{2}(\hat{\theta}^{\beta} - \theta^0)^{\intercal} \left( \dot{\Phi}(\theta^0) + \Ddot{\Phi}(\tilde{\tilde{\theta}}) (\tilde{\theta} - \theta^0)\right) (\hat{\theta}^{\beta} - \theta^0) \\
     & = \frac{1}{2}(\hat{\theta}^{\beta} - \theta^0)^{\intercal} \left( \dot{\Phi}(\theta^0) + o_P(1) \right) (\hat{\theta}^{\beta} - \theta^0),
\end{align*}
as ${\Phi}(\theta^0) = 0$ and $\Ddot{\Phi}(\tilde{\tilde{\theta}}) \leq \mathbb{E}^0[h(D)]$ with probability approaching 1. Then  the rescaled excess risk can be written as 
\begin{equation}\label{eq:risk-taylor}
  m \cdot \left( \mathbb{E}^0[\mathcal{L}(\hat \theta^\beta,D)]  -  \mathbb{E}^0[\mathcal{L}( \theta^0,D)] \right) = m \cdot (\hat \theta^\beta - \theta^0)^\intercal \mathbb{E}^0[\partial_\theta^2 \mathcal{L}(\theta^0,D)]   (\hat \theta^\beta - \theta^0) + o_P(1).
\end{equation}
Using \eqref{eq:theta-asymp}, asymptotically \eqref{eq:risk-taylor} follows a distribution with asymptotic mean
\begin{equation*}
 \mu^\beta =  \beta^\intercal \Sigma^W \beta \cdot \mathrm{Trace}( \mathbb{E}^0[\partial_\theta^2 \mathcal{L}(\theta^0,D)]^{-1} \mathrm{Var}_{\mathbb{P}^0}(  \partial_\theta \mathcal{L}(\theta^0,D) ))
\end{equation*}
This completes the proof. 

In the following, we add regularity conditions that lead to the consistency of M-estimators, $\hat{\theta}^{\beta}$. 

\begin{lemma}[Consistency of M-estimators]\label{lemma:consistency} Consider the M-estimator
$$
\hat{\theta}^{\beta} = \arg \min_{\theta \in \Omega} \sum_{k=1}^K \beta_k \hat{\mathbb{E}}^k[\mathcal{L}(\theta, D)] 
$$
and the target $\theta^0 = \arg \min_{\theta\in \Omega}\mathbb{E}^0[\mathcal{L}(\theta, D)]$, where $\Omega$ is a compact subset of $\mathbb{R}^d$. 
    Assume that $\theta \xrightarrow[]{} \mathcal{L}(\theta, D)$ is continuous and that $\inf_{||\theta - \theta'||_2 \leq \delta}\mathcal{L}(\theta, D)$ is square-integrable under $\mathbb{P}^0$ for every $\delta$ and $\theta'$ and that $\inf_{\theta \in \Omega}\mathcal{L}(\theta, D)$ is square-integrable. We assume that $\mathbb{E}^0[\mathcal{L}(\theta, D)]$ has a unique minimum. Then, $\hat{\theta}^{\beta} - \theta^0 = o_P(1)$.
\end{lemma}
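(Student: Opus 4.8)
The plan is to run the classical Wald-type consistency argument for M-estimators (see \citet{van2000asymptotic}, Chapter~5), with the one structural change that the ``law of large numbers'' for the averaging operator $M_n(\theta) := \sum_{k=1}^K \beta_k \hat{\mathbb{E}}^k[\mathcal{L}(\theta,D)]$ is supplied by Theorem~\ref{thm:perturbation-model} instead of the usual i.i.d.\ LLN. Concretely, for any square-integrable $\phi$, applying Theorem~\ref{thm:perturbation-model} with $\phi_1 = \dots = \phi_K = \phi$ and using $\sum_k \beta_k = 1$, $\beta \ge 0$ gives $\sqrt{m}\big(\sum_{k=1}^K \beta_k \hat{\mathbb{E}}^k[\phi(D)] - \mathbb{E}^0[\phi(D)]\big) \stackrel{d}{\rightarrow} \mathcal{N}(0,\beta^\intercal \Sigma^W \beta \cdot \text{Var}_{\mathbb{P}^0}(\phi(D)))$, hence $\sum_{k=1}^K \beta_k \hat{\mathbb{E}}^k[\phi(D)] = \mathbb{E}^0[\phi(D)] + o_P(1)$. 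I will only ever apply this to functions that the hypotheses certify are square-integrable, namely the local infima $\inf_{\|\theta-\theta'\|_2 \le \delta}\mathcal{L}(\theta,D)$ and $\mathcal{L}(\theta^0,D)$.

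First I would record the easy structural facts: continuity of $\theta \mapsto \mathcal{L}(\theta,D)$ makes $M_n$ a.s.\ continuous on the compact set $\Omega$, so $\hat\theta^\beta$ exists and can be chosen measurably, and $M(\theta) := \mathbb{E}^0[\mathcal{L}(\theta,D)]$ is well-defined with values in $(-\infty,+\infty]$ thanks to the integrable lower envelope $\inf_{\theta\in\Omega}\mathcal{L}(\theta,D) \in L^2(\mathbb{P}^0)$. Next I would establish the separation fact: for any $\theta_1 \ne \theta^0$, continuity of $\mathcal{L}(\cdot,D)$ makes $\delta \mapsto \inf_{\|\theta-\theta_1\|_2 \le \delta}\mathcal{L}(\theta,D)$ increase pointwise to $\mathcal{L}(\theta_1,D)$ as $\delta \downarrow 0$; since this family is bounded below by a fixed integrable function, monotone convergence yields $\mathbb{E}^0[\inf_{\|\theta-\theta_1\|_2 \le \delta}\mathcal{L}(\theta,D)] \uparrow M(\theta_1) > M(\theta^0)$, where the strict inequality is the uniqueness of the minimizer. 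Thus, fixing an arbitrary open neighborhood $G \ni \theta^0$, every $\theta_1 \in \Omega \setminus G$ admits a radius $\delta(\theta_1)$ for which $c_{\theta_1} := \mathbb{E}^0[\inf_{\|\theta-\theta_1\|_2 \le \delta(\theta_1)}\mathcal{L}(\theta,D)]$ is finite and strictly larger than $M(\theta^0)$.

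Then I would conclude by a compactness-plus-distributional-LLN argument. Covering the compact set $\Omega \setminus G$ by finitely many balls $B_1,\dots,B_N$ of the above type and setting $\epsilon := \tfrac13\min_i(c_i - M(\theta^0)) > 0$, nonnegativity of $\beta$ gives $\inf_{\theta \in B_i} M_n(\theta) \ge \sum_k \beta_k \hat{\mathbb{E}}^k[\inf_{\theta \in B_i}\mathcal{L}(\theta,D)]$, and by the distributional LLN of the first paragraph this bound converges in probability to $c_i$ for each $i$; intersecting the $N$ events shows $\inf_{\theta \in \Omega \setminus G} M_n(\theta) \ge M(\theta^0) + 2\epsilon$ with probability tending to $1$. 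Meanwhile $M_n(\hat\theta^\beta) \le M_n(\theta^0) = M(\theta^0) + o_P(1)$, again by the distributional LLN. Hence with probability tending to $1$ we have $M_n(\hat\theta^\beta) < M(\theta^0) + \epsilon < \inf_{\theta \in \Omega \setminus G} M_n(\theta)$, which forces $\hat\theta^\beta \in G$; since $G$ was an arbitrary neighborhood of $\theta^0$, this is exactly $\hat\theta^\beta - \theta^0 = o_P(1)$.

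The main obstacle is conceptual rather than computational: because the hypotheses control only infima over balls (not suprema), the loss may be unbounded above, so $\sup_{\theta\in\Omega}|M_n(\theta) - M(\theta)|$ need not vanish and the shortcut ``uniform convergence then $\operatorname{argmin}$'' is unavailable — the Wald-type scheme above is precisely what circumvents this via one-sided envelopes and a finite subcover. The only place a two-sided handle is genuinely needed is the single point $\theta^0$, where I appeal to square-integrability of $\mathcal{L}(\theta^0,D)$ (in line with the other integrability assumptions of the paper); should one prefer to avoid even that, $M_n(\hat\theta^\beta)$ can instead be bounded above by $\sum_k \beta_k \hat{\mathbb{E}}^k[\lambda(B)^{-1}\int_B \mathcal{L}(\theta,D)\,d\theta]$ for a small ball $B \ni \theta^0$, which converges to $\lambda(B)^{-1}\int_B M(\theta)\,d\theta$ and then to $M(\theta^0)$ by continuity of $M$. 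A minor care-point throughout is measurability of the ball-infima, which follows from continuity of $\theta \mapsto \mathcal{L}(\theta,D)$ by reducing to a countable dense set of $\theta$'s.
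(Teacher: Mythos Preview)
Your proposal is correct and follows essentially the same route as the paper's proof: both run the Wald-type argument of \citet{van2000asymptotic}, Theorem~5.14 (the paper sets $m_\theta = -\mathcal{L}(\theta,\cdot)$ and works with suprema, you work with infima of $\mathcal{L}$ directly), using monotone convergence for the separation step, a finite subcover of $\Omega\setminus G$, and Theorem~\ref{thm:perturbation-model} in place of the classical LLN. Your explicit mention of the nonnegativity of $\beta$ for the envelope inequality, the measurability remark, and the alternative ball-averaging bound at $\theta^0$ are useful clarifications but do not alter the structure.
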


\begin{proof}
    The proof follows \cite{van2000asymptotic}, Theorem 5.14 with $m_\theta(D) = - \mathcal{L}(\theta,D)$. 
    
    Fix some $\theta$ and let $U_{\ell} \downarrow \theta$ be a decreasing sequence of open balls around $\theta$ of diameter converging to zero. Write $m_{U}(D)$ for $\sup_{\theta \in U}m_{\theta}(D)$. The sequence $m_{U_{\ell}}$ is decreasing and greater than $m_{\theta}$ for every $\ell$. As $\theta \xrightarrow[]{} m_{\theta}(D)$ is continuous, by monotone convergence theorem, we have $\mathbb{E}^0[m_{U_{\ell}}(D)] \downarrow \mathbb{E}^0[m_{\theta}(D)]$. 

    For $\theta \neq \theta^0$, we have $\mathbb{E}^0[m_\theta(D)] < \mathbb{E}^0[m_{\theta^0}(D)]$.  Combine this with the preceding paragraph to see that for every $\theta \neq \theta^0$ there exits an open ball $U_{\theta}$ around $\theta$ with $\mathbb{E}^0[m_{U_{\theta}}(D)] < \mathbb{E}^0[m_{\theta^0}(D)]$. For any given $\epsilon > 0$, let the set $B = \{\theta \in \Omega : ||\theta - \theta^0||_2 \geq \epsilon\}$. The set $B$ is compact and is covered by the balls $\{U_{\theta}: \theta \in B\}$. Let $U_{\theta_1},\dots, U_{\theta_p}$ be a finite sub-cover. Then, 
\begin{align}
    \sup_{\theta \in B} \sum_{k=1}^K\beta_k \frac{1}{n_k} \sum_{i=1}^{n_k} m_{\theta}(D_{ki}) & \leq \sup_{j = 1, \dots, p} \sum_{k=1}^K\beta_k \frac{1}{n_k} \sum_{i=1}^{n_k} m_{U_{\theta_j}}(D_{ki})
    \nonumber \\ &= \sup_{j = 1, \dots, p} \mathbb{E}^0[ m_{U_{\theta_j}}(D)] + o_P(1) < \mathbb{E}^0[m_{\theta^0}(D)] + o_P(1) \label{eq:pf-lemma2},
\end{align}
where for the equality we apply Theorem~\ref{thm:perturbation-model} with $\phi(D) = m_{U_{\theta_j}}(D)$ for all $j=1,\ldots,p$. 

If $\hat{\theta}^{\beta} \in B$, then 
\begin{equation*}
    \sup_{\theta \in B} \sum_{k=1}^K\beta_k \frac{1}{n_k} \sum_{i=1}^{n_k} m_{\theta}(D_{ki}) \geq \sum_{k=1}^K\beta_k \frac{1}{n_k} \sum_{i=1}^{n_k} m_{\hat{\theta}^{\beta}}(D_{ki})  \geq \sum_{k=1}^K\beta_k \frac{1}{n_k} \sum_{i=1}^{n_k} m_{\theta^0}(D_{ki}) - o_P(1),
\end{equation*}
where the last inequality comes from the definition of $\hat{\theta}^{\beta}$.
Using Theorem~\ref{thm:perturbation-model} with $\phi(D) = m_{\theta^0}(D)$, we have
\begin{equation*}
   \sum_{k=1}^K\beta_k \frac{1}{n_k} \sum_{i=1}^{n_k} m_{\theta^0}(D_{ki}) - o_P(1) = \mathbb{E}^0[m_{\theta^0}(D)] - o_P(1).
\end{equation*}
Therefore, 
\begin{equation*}
    \{\hat{\theta}^{\beta} \in B\} \subset \Bigg\{ \sup_{\theta \in B} \sum_{k=1}^K\beta_k \frac{1}{n_k} \sum_{i=1}^{n_k} m_{\theta}(D_{ki}) \geq \mathbb{E}^0[m_{\theta^0}(D)] - o_P(1) \Bigg\}.
\end{equation*}
By the equation \eqref{eq:pf-lemma2}, the probability of the event on the right hand side converges to zero as $n \xrightarrow[]{} \infty$. This completes the proof. 
\end{proof}

\end{proof}

\section{Proof of Theorem \ref{theorem:dlm}}\label{sec:dlmproofs}

Define the feature matrix before reparametrization as 
\begin{equation*}
	\Phi = \begin{pmatrix}
		\hat{\mathbb{E}}^1[\phi_{1}] - \hat{\mathbb{E}}^0[\phi_1]& \dots & \hat{\mathbb{E}}^K[\phi_{1}]  - \hat{\mathbb{E}}^0[\phi_1]\\
		\vdots & & \vdots \\
		\hat{\mathbb{E}}^1[\phi_{L}]  - \hat{\mathbb{E}}^0[\phi_L]& \dots & \hat{\mathbb{E}}^K[\phi_{L}] - \hat{\mathbb{E}}^0[\phi_L]
	\end{pmatrix} \in \mathbb{R}^{L \times K},
\end{equation*}
where the $\ell$-th row is defined as $\Phi_{\ell}$.

 By Theorem~\ref{thm:perturbation-model} and using that the sampling uncertainty is of low order than distributional uncertainty, we have
    \begin{equation*}
        \sqrt{m} \Phi_{\ell} := \sqrt{m}\begin{pmatrix}
            \hat{\mathbb{E}}^1[\phi_{\ell}] - \hat{\mathbb{E}}^0[\phi_{\ell}] \\
            \vdots \\
            \hat{\mathbb{E}}^K[\phi_{\ell}] - \hat{\mathbb{E}}^0[\phi_{\ell}] 
        \end{pmatrix}
        \stackrel{d}{=} \bm{Z}_{\ell} + o_p(1)
    \end{equation*}
    where $\bm{Z}_1, \dots, \bm{Z}_L \in \mathbb{R}^K$ are i.i.d Gaussian random variables following $N(0, \Sigma^W)$. Let $\bm{Z} \in  \mathbb{R}^{L \times K}$ be a matrix in which the $\ell$-th row is $\bm{Z}_{\ell}$.

       Note that
    \begin{equation*}
    \hat{\beta} = \argmin_{\beta:\beta^{\intercal}1 = 1} \beta^{\intercal}\left(\sum_{\ell=1}^L \Phi_{\ell}\Phi_{\ell}^{\intercal}\right)\beta = \arg \min_{\beta:\beta^{\intercal}1 = 1}\beta^{\intercal} \bm{\Phi}^{\intercal}\bm{\Phi} \beta.
    \end{equation*}
    The closed form of $\hat{\beta}$ is known as
    \begin{equation*}
        \hat{\beta} = \frac{(\bm{\Phi}^{\intercal}\bm{\Phi})^{-1}\bm{1}}{\bm{1}^{\intercal}(\bm{\Phi}^{\intercal}\bm{\Phi})^{-1}\bm{1}} \stackrel{d}{=} \frac{(\bm{Z}^{\intercal}\bm{Z})^{-1}\bm{1}}{\bm{1}^{\intercal}(\bm{Z}^{\intercal}\bm{Z})^{-1}\bm{1}} + o_p(1).
    \end{equation*}
    We write $\hat{\beta}_Z = \frac{(\bm{Z}^{\intercal}\bm{Z})^{-1}\bm{1}}{\bm{1}^{\intercal}(\bm{Z}^{\intercal}\bm{Z})^{-1}\bm{1}} $.
    Moreover, the closed form of $\beta^*$ is
    \begin{equation*}
        \beta^* = \argmin_{\beta:\beta^{\intercal}1 = 1} \beta^{\intercal}\Sigma^W\beta = \frac{(\Sigma^W)^{-1}\bm{1}}{\bm{1}^{\intercal}(\Sigma^W)^{-1}\bm{1}}.
    \end{equation*}
    Let $Q$ be a $p \times (K-1)$ non-random matrix with $\text{rank}(Q) =p \leq K-1$.
    Let
    $\Tilde{Q} = (Q'^{\intercal}, \bm{1})^{\intercal} \in \mathbb{R}^{(p+1)\times K}$ where 
    $$Q' = Q \begin{pmatrix}
        \bm{I}_{(K-1) \times (K-1)} & \bm{0}_{(K-1) \times 1} 
    \end{pmatrix} \in \mathbb{R}^{p \times K}.
    $$ 
    Note that the $\text{rank}(\Tilde{Q}) = p + 1 \leq K$. Define $\Tilde{S} := \tilde{Q}(\bm{Z}^{\intercal}\bm{Z})^{-1}\Tilde{Q}^{\intercal}$ where
    \begin{equation*}
        \Tilde{S} = \begin{pmatrix}
             \Tilde{S}_{11} & \Tilde{S}_{12} \\
             \Tilde{S}_{21} & \Tilde{S}_{22}
        \end{pmatrix} = \begin{pmatrix}
            Q'(\bm{Z}^{\intercal}\bm{Z})^{-1}Q'^{\intercal} &  Q'(\bm{Z}^{\intercal}\bm{Z})^{-1}\bm{1}\\
             \bm{1}^{\intercal}(\bm{Z}^{\intercal}\bm{Z})^{-1}Q'^{\intercal} & \bm{1}^{\intercal}(\bm{Z}^{\intercal}\bm{Z})^{-1}\bm{1}.
        \end{pmatrix}
    \end{equation*}
     Similarly, let $\Tilde{P} := \tilde{Q}(\Sigma^W)^{-1}\Tilde{Q}^{\intercal}$ where $\Tilde{P}_{11} =  Q'(\Sigma^W)^{-1}Q'^{\intercal}, \Tilde{P}_{12} = Q'(\Sigma^W)^{-1}\bm{1},$ and $\Tilde{P}_{22} =  \bm{1}^{\intercal}(\Sigma^W)^{-1}\bm{1}$. 

     \paragraph{Results from Multivariate Statistical Theory.} Note that $\bm{Z}^{\intercal}\bm{Z} \sim \text{Wish}_K(L; \Sigma^W)$ which is the Wishart distribution with a degree of freedom $L$ and a scale matrix $\Sigma^W$. By Theorem 3.2.11 and 3.6 in \cite{muirhead1982aspects}, we have
    \begin{equation*}
        \Tilde{S} \sim W_{p+1}^{-1}(L-K+2p + 3; \Tilde{P}).
    \end{equation*}
    which is the inverse-Wishart distribution with a degree of freedom $L - K + 2p + 3$ and a scale matrix $\Tilde{P}$.
     By Theorem 3 (b) and (d) of \cite{Bodnar2008Wishart}, we have that
    \begin{equation*}
        \Tilde{S}_{12}\Tilde{S}_{22}^{-1} | \Tilde{S}_{11.2} \sim N(\Tilde{P}_{12}\Tilde{P}_{22}^{-1}, \Tilde{S}_{11.2} \otimes \Tilde{P}_{22}^{-1}).
    \end{equation*}
    Therefore, we get
    \begin{equation*}
        \Tilde{P}_{22}^{1/2} \Tilde{S}_{11.2}^{-1/2} \left(
\Tilde{S}_{12}\Tilde{S}_{22}^{-1} - \Tilde{P}_{12}\Tilde{P}_{22}^{-1}
        \right)| \Tilde{S}_{11.2} \sim N(0, \bm{I}).
    \end{equation*}
    The right hand side does not depend on $\Tilde{S}_{11.2}$ anymore. Therefore, we have
    \begin{equation}
        \Tilde{P}_{22}^{1/2} \Tilde{S}_{11.2}^{-1/2}  \left( Q(\hat{\beta}_Z)_{1:(K-1)}  -  Q{\beta}^*_{1:(K-1)}  \right)\sim N(\bm{0}, \bm{I}).
    \end{equation}
    By Theorem 3.2.12 in \cite{muirhead1982aspects}, we have
    \begin{equation*}
         \frac{\Tilde{P}_{22}}{\Tilde{S}_{22}} = \frac{\bm{1}^{\intercal}(\Sigma^W)^{-1}\bm{1}}{\bm{1}^{\intercal}(Z^{\intercal}Z)^{-1}\bm{1}} \sim \chi^2(L-K+1).
    \end{equation*}
    By Theorem 3 (e) of \cite{Bodnar2008Wishart}, $\Tilde{S}_{22}$ is independent of $\Tilde{S}_{12}\Tilde{S}_{22}^{-1}$ and $\Tilde{S}_{11.2}$. Hence, we get
    \begin{equation*}
    \sqrt{L - K + 1} \cdot \Tilde{S}_{22}^{1/2}\Tilde{S}_{11.2}^{-1/2}  \left( Q(\hat{\beta}_Z)_{1:(K-1)}  -  Q{\beta}^*_{1:(K-1)}  \right)\sim t_{p}(L-K+1; \bm{0}, \bm{I}),
    \end{equation*}
    where $ t_{p}(L-K+1; \bm{0}, \bm{I})$ is $p$-dimensional (centered) $t$-distribution with $L-K+1$ degrees of freedom.
    Now using $\Phi = \bm{Z}/\sqrt{m} + o_p(1/\sqrt{m})$ from Theorem~\ref{thm:perturbation-model}, we get
    \begin{align}
        \sqrt{L - K + 1}\sqrt{\bm{1}^{\intercal}(\bm{\Phi}^{\intercal}\bm{\Phi})^{-1}\bm{1}} \cdot \left[QRQ^{\intercal}\right]^{-1/2} \left( Q\hat{\beta}_{1:(K-1)}  -  Q{\beta}^*_{1:(K-1)}  \right)\\ \stackrel{d}{=} t_{p}(L-K+1; \bm{0}, \bm{I}) + o_p(1)\label{eq:tdist}.
    \end{align}
    where
    \begin{equation*}
        R = \begin{pmatrix}
        \bm{I}_{(K-1) \times (K-1)} & \bm{0}_{(K-1) \times 1} 
    \end{pmatrix}  \left((\bm{\Phi}^{\intercal}\bm{\Phi})^{-1} - \frac{(\bm{\Phi}^{\intercal}\bm{\Phi})^{-1}\bm{1}\bm{1}^{\intercal}(\bm{\Phi}^{\intercal}\bm{\Phi})^{-1}}{\bm{1}^{\intercal}(\bm{\Phi}^{\intercal}\bm{\Phi})^{-1}\bm{1}} \right) \begin{pmatrix}
        \bm{I}_{(K-1) \times (K-1)} & \bm{0}_{(K-1) \times 1} 
    \end{pmatrix}^{\intercal}.
    \end{equation*}

    \paragraph{Connection with Reparametrized Linear Regression.}
    In this part, we will show that 
    \begin{equation}\label{eq:varx}  (\Tilde{\bm{\Phi}}^{\intercal}\Tilde{\bm{\Phi}})^{-1} = R.
    \end{equation}
    Using the closed form of $\hat{\beta}$, we have that
    \begin{equation}\label{eq:rss}
         (L - K + 1) \cdot \hat{\sigma}^2 = \hat{\beta}^{\intercal}\Phi^{\intercal}\Phi \hat{\beta} = \frac{1}{\bm{1}^{\intercal}(\Phi^{\intercal}\Phi)^{-1}\bm{1}}.
    \end{equation}
Then combining \eqref{eq:tdist}, \eqref{eq:varx}, and \eqref{eq:rss}, we have 
 \begin{equation*}
         \left(Q (\Tilde{\Phi}^{\intercal}\Tilde{\Phi})^{-1}\hat{\sigma}^2Q^{\intercal}\right)^{-\frac{1}{2}} \left(Q \hat{\beta}_{1:(K-1)} - Q \beta^*_{1:(K-1)}\right) \stackrel{d}{=} t_p(L - K + 1; \bm{0}, \bm{I}) +o_p(1).
     \end{equation*}
     Note that this is a more general version of Theorem~\ref{theorem:dlm}. By letting $Q$ be an $(K-1)\times(K-1)$ identity matrix, we have Theorem~\ref{theorem:dlm}.  For the simplicity of notation, let $V =\bm{\Phi}^{\intercal}\bm{\Phi}$. Note that 
     \begin{align*}
        \Tilde{\bm{\Phi}}^{\intercal}\Tilde{\bm{\Phi}} &= 
        \begin{pmatrix}
            \bm{I}_{(K-1)\times(K-1)} & -\bm{1}
        \end{pmatrix} 
        \begin{pmatrix}
        V_{11} & V_{12} \\V_{12}^{\intercal} & V_{22}
        \end{pmatrix}
        \begin{pmatrix}
            \bm{I}_{(K-1)\times(K-1)} & -\bm{1}
        \end{pmatrix}^{\intercal} \\
        & = V_{11} - \bm{1}V_{12}^{\intercal} - V_{12}\bm{1}^{\intercal} + \bm{1}V_{22}\bm{1}^{\intercal}
     \end{align*}
     where $V_{22} \in \mathbb{R}$ and $\bm{1} \in \mathbb{R}^{(K-1)\times 1}$. Then, by applying Sherman-Morrison lemma, we have
     \begin{align*}
         (\Tilde{\bm{\Phi}}^{\intercal}\Tilde{\bm{\Phi}})^{-1} & = \left(V_{11.2} + V_{12}V_{22}^{-1}V_{12}^{\intercal}-\bm{1}V_{12}^{\intercal} - V_{12}\bm{1}^{\intercal} + \bm{1}V_{22}\bm{1}^{\intercal}\right)^{-1}\\
         & = \left(V_{11.2} + \left(V_{12}V_{22}^{-1/2} - V_{22}^{1/2}\bm{1}\right)\left(V_{12}V_{22}^{-1/2} - V_{22}^{1/2}\bm{1}\right)^{\intercal}\right)^{-1} \\
         & = V_{11.2}^{-1} - \frac{V_{11.2}^{-1}\left(V_{12}V_{22}^{-1/2} - V_{22}^{1/2}\bm{1}\right)\left(V_{12}V_{22}^{-1/2} - V_{22}^{1/2}\bm{1}\right)^{\intercal}V_{11.2}^{-1}}{1 + \left(V_{12}V_{22}^{-1/2} - V_{22}^{1/2}\bm{1}\right)^{\intercal}V_{11.2}^{-1}\left(V_{12}V_{22}^{-1/2} - V_{22}^{1/2}\bm{1}\right)}.
     \end{align*}
     By using the inverse of block matrix as
     \begin{equation*}
         V^{-1} = \begin{pmatrix}
             V_{11.2}^{-1} & - V_{11.2}^{-1}V_{12}V_{22}^{-1} \\
             - V_{22}^{-1}V_{12}^{\intercal}V_{11.2}^{-1} & V_{22}^{-1} + V_{22}^{-1}V_{12}^{\intercal}V_{11.2}^{-1}V_{12}V_{22}^{-1}
         \end{pmatrix},
     \end{equation*} we also get
     \begin{align*}
         R = V_{11.2}^{-1} -\frac{V_{11.2}^{-1}\left(V_{12}V_{22}^{-1} - \bm{1}\right)\left(V_{12}V_{22}^{-1} - \bm{1}\right)^{\intercal}V_{11.2}^{-1}}{
         \bm{1}^{\intercal}V_{11.2}^{-1}\bm{1} - V_{22}^{-1}V_{12}^{\intercal}V_{11.2}^{-1}\bm{1} - \bm{1}^{\intercal}V_{11.2}^{-1}V_{12}V_{22}^{-1} + V_{22}^{-1} + V_{22}^{-1}V_{12}^{\intercal}V_{11.2}^{-1}V_{12}V_{22}^{-1}
         }.
     \end{align*}
     Then, we can easily check that $(\Tilde{\bm{\Phi}}^{\intercal}\Tilde{\bm{\Phi}})^{-1} = R$. This completes the proof.

\section{Confidence Intervals for Target Parameters}\label{sec:prediction-intervals}

First, we will discuss how to form asymptotically valid confidence intervals for $\mathbb{E}^0[\phi_0(D)]$. Assume that test functions $\phi_1(X), \dots, \phi_L(X)$ are uncorrelated and have unit variances under $\mathbb{P}^0$. 

Let's say $\hat{\beta}$ in \eqref{eq:least-squares} is consistent as follows,
\begin{equation*}
        \hat{\beta} \xrightarrow[]{p} \beta^* := \argmin_{\beta:\beta^{\intercal}1} \beta^{\intercal}\Sigma^W\beta.
\end{equation*}
Then, we have for $\overrightarrow{\hat{\mathbb{E}}[\phi_0]} = (\hat{\mathbb{E}}^1[\phi_0], \dots, \hat{\mathbb{E}}^K[\phi_0])^{\intercal}$,
\begin{align*}
\sqrt{m}\left(\sum_{k=1}^{K} \hat{\beta}_k \hat{\mathbb{E}}^k[\phi_0] - \mathbb{E}^0[\phi_0]\right) &= \sqrt{m}\left(\hat{\beta} - \beta^*\right)^{\intercal}\left(\overrightarrow{\hat{\mathbb{E}}[\phi_0]} - \mathbb{1} \cdot \mathbb{E}^0[\phi_0]\right) \\
&+ \sqrt{m} \text{  } {\beta^*}^{\intercal} \left(\overrightarrow{\hat{\mathbb{E}}[\phi_0]} - \mathbb{1} \cdot \mathbb{E}^0[\phi_0]\right)
\end{align*}
since $\hat{\beta}^{\intercal}\mathbb{1} = 1$ and $\beta^*{}^{\intercal}\mathbb{1} = 1$. By Theorem~\ref{thm:perturbation-model}, 
\begin{equation*}
    \sqrt{m} \left(\overrightarrow{\hat{\mathbb{E}}[\phi_0]} - \mathbb{1} \cdot \mathbb{E}^0[\phi_0]\right) \xrightarrow[]{d} N(0, \text{Var}_{\mathbb{P}^0}(\phi_0) \cdot \Sigma^W). 
\end{equation*}
Therefore, 
\begin{align}
\sqrt{m}\left(\sum_{k=1}^{K} \hat{\beta}_k \hat{\mathbb{E}}^k[\phi_0] - \mathbb{E}^0[\phi_0]\right) &=  \sqrt{m} \text{  } {\beta^*}^{\intercal} \left(\overrightarrow{\hat{\mathbb{E}}[\phi_0]} - \mathbb{1} \cdot \mathbb{E}^0[\phi_0]\right)  + o_p(1) \nonumber \\
&\xrightarrow[]{d} N(0, \text{Var}_{\mathbb{P}^0}(\phi_0) \cdot  {\beta^*}^{\intercal} \Sigma^W \beta^*). \label{eq:ci-part1}
\end{align}
Then, we can construct confidence interval for $\mathbb{E}^0[\phi_0]$ as
\begin{equation*}
    \sum_{k=1}^{K} \hat{\beta}_k \hat{\mathbb{E}}^k[\phi_0] \pm z_{1-\alpha/2} \cdot  \sqrt{\text{Var}_{\mathbb{P}^0}(\phi_0)} \cdot \sqrt{\frac{{\beta^*}^{\intercal} \Sigma^W \beta^*}{m}}.
\end{equation*}
How can we obtain a consistent estimator for ${\beta^*}^{\intercal} \Sigma^W \beta^* / m$? By Theorem~\ref{thm:perturbation-model},
\begin{equation*}
   \hat{\mathbb{E}}^0[\phi_{\ell}] - \sum_{k=1}^{K} \beta_k^* \hat{\mathbb{E}}^k[\phi_{\ell}] \text{  } \stackrel{d}{=} \text{  }\sqrt{\frac{{\beta^*}^{\intercal} \Sigma^W \beta^*}{m}} \cdot Z_{\ell} + o_p(1/\sqrt{m})
\end{equation*}
where $Z_{\ell}$ for $\ell = 1, \dots, L$ are independent standard Gaussian variables. Then, 
\begin{align*}
    \left(\hat{\mathbb{E}}^0[\phi_{\ell}] - \sum_{k=1}^{K} \hat{\beta}_k \hat{\mathbb{E}}^k[\phi_{\ell}] \right)^2 & = \left(\left(\hat{\mathbb{E}}^0[\phi_{\ell}] - \sum_{k=1}^{K} \beta^*_k\hat{\mathbb{E}}^k[\phi_{\ell}]\right) + \left( \sum_{k=1}^{K}(\hat{\beta}_k - \beta^*_k)(\hat{\mathbb{E}}^0[\phi_{\ell}] - \hat{\mathbb{E}}^k[\phi_{\ell}])\right) \right)^2 \\
    & \stackrel{d}{=}{\frac{{\beta^*}^{\intercal} \Sigma^W \beta^*}{m}} \cdot Z_{\ell}^2 + o_p(1/m).
\end{align*}
Therefore, we may estimate ${\beta^*}^{\intercal} \Sigma^W \beta^* / m$ as 
\begin{equation}\label{eq:ci-part2}
    \frac{1}{L}\sum_{\ell = 1}^{L}  \left(\hat{\mathbb{E}}^0[\phi_{\ell}] - \sum_{k=1}^{K} \hat{\beta}_k \hat{\mathbb{E}}^k[\phi_{\ell}] \right)^2 \stackrel{d}{=} {\frac{{\beta^*}^{\intercal} \Sigma^W \beta^*}{m}} \cdot \frac{\chi^2(L)}{L} + o_p(1/m),
\end{equation}
where $\chi^2(L)$ is a chi-squared random variable with degrees of freedom $L$. 

Finally, as $L \xrightarrow[]{}\infty$, we have the consistency of $\hat{\beta}$ and $\chi^2(L)/L \xrightarrow[]{p} 1$. Suppose we have $\widehat{\text{Var}}_{\mathbb{P}^0}(\phi_0(D))  = \text{Var}_{\mathbb{P}^0}(\phi_0(D)) + o_p(1)$ as in Remark~\ref{remark:var}. Combining all the results, we have asymptotically valid ($1-\alpha$) confidence interval as $m \xrightarrow[]{} \infty$ and $L \xrightarrow[]{} \infty$:
\begin{equation*}
    \sum_{k=1}^{K} \hat{\beta}_k \hat{\mathbb{E}}^k[\phi_0] \pm z_{1-\alpha/2} \cdot  \sqrt{\widehat{\text{Var}}_{\mathbb{P}^0}(\phi_0)} \cdot \sqrt{ \frac{1}{L}\sum_{\ell = 1}^{L}  \left(\hat{\mathbb{E}}^0[\phi_{\ell}] - \sum_{k=1}^{K} \hat{\beta}_k \hat{\mathbb{E}}^k[\phi_{\ell}] \right)^2 }.
\end{equation*}

Now let's consider the case in Remark~\ref{remark:ci-target} where 
\begin{equation*}
    \theta^0 = \arg\min_\theta \mathbb{E}^0[\mathcal{L}(\theta, X, Y)], \quad \hat{\theta} = \arg\min_\theta \sum_{k=1}^K \hat{\beta}_k \hat{\mathbb{E}}^k[\mathcal{L}(\theta, X, Y)].
\end{equation*}
Assume the regularity conditions of Lemma~\ref{lemma:ooderror} and the consistency of $\hat{\beta}$ and $\hat{\theta}$ ($\hat{\beta} \xrightarrow[]{p} \beta^*$,  $\hat{\theta} \xrightarrow[]{p} \theta^0$). Then, from the proof of Lemma~\ref{lemma:ooderror}, we have
\begin{equation*}
    \sqrt{m}\left(\hat{\theta} - \theta^0\right) = \sqrt{m}\left(\sum_{k=1}^K \hat{\beta}_k \hat{\mathbb{E}}^k[\phi(D)] - \mathbb{E}^0[\phi(D)]\right) + o_P(1),
\end{equation*}
where $\phi(D) = - \mathbb{E}^0[\partial_\theta^2 \mathcal{L}(\theta^0,D)]^{-1}  \partial_\theta \mathcal{L}(\theta^0,D)$. Then, from the results above, we only need to show that $\widehat{\text{Var}}_{\mathbb{P}^0}(\phi(D))  = \text{Var}_{\mathbb{P}^0}(\phi(D)) + o_p(1)$. Note that
\begin{align*}
    \sum_{k=1}^K \hat \beta_k \hat{\mathbb{E}}^k[\partial_\theta^2 \mathcal{L}( \hat \theta,X,Y)] &- \mathbb{E}^0[\partial_\theta^2 \mathcal{L}( \theta^0,X,Y)] \\&=  \sum_{k=1}^K \hat \beta_k \hat{\mathbb{E}}^k[\partial_\theta^2 \mathcal{L}( \hat \theta,X,Y)] - \sum_{k=1}^K \hat \beta_k \hat{\mathbb{E}}^k[\partial_\theta^2 \mathcal{L}( \theta^0,X,Y)] \\
    & + \sum_{k=1}^K \hat \beta_k \hat{\mathbb{E}}^k[\partial_\theta^2 \mathcal{L}( \theta^0,X,Y)] - \sum_{k=1}^K \beta_k^* {\mathbb{E}}^0[\partial_\theta^2 \mathcal{L}( \theta^0,X,Y)]. 
\end{align*}
Similarly as in the proof of Lemma~\ref{lemma:ooderror}, we have that
\begin{align*}
    \Big | \Big|\sum_{k=1}^K \hat \beta_k \hat{\mathbb{E}}^k[\partial_\theta^2 \mathcal{L}( \hat \theta,X,Y)] &- \sum_{k=1}^K \hat \beta_k \hat{\mathbb{E}}^k[\partial_\theta^2 \mathcal{L}( \theta^0,X,Y)] \Big | \Big| \\ & \leq \left(\sum_{k=1}^K (|\beta_k^*|+o_P(1)) \frac{1}{n_k}\sum_{i=1}^{n_k}h(D_{ki})\right) ||\hat{\theta}-\theta^0||= o_P(1).
\end{align*}
Moreover, with the consistency of $\hat{\beta}$ and Theorem~\ref{thm:perturbation-model}, we have 
\begin{equation*}
    \sum_{k=1}^K \hat \beta_k \hat{\mathbb{E}}^k[\partial_\theta^2 \mathcal{L}( \theta^0,X,Y)] - \sum_{k=1}^K \beta_k^* {\mathbb{E}}^0[\partial_\theta^2 \mathcal{L}( \theta^0,X,Y)] = o_P(1)
\end{equation*}
Therefore, 
\begin{equation*}
    \hat{\phi}(D) = -\left(\mathbb{E}^0[\partial_\theta^2 \mathcal{L}( \theta^0,D)] + o_P(1)\right)^{-1}\partial_\theta \mathcal{L}(\hat{\theta}, D). 
\end{equation*}
Note that 
\begin{align*}
    \hat{\mathbb{E}}^k[\partial_\theta \mathcal{L}(\hat{\theta}, D)\partial_\theta \mathcal{L}(\hat{\theta}, D)^{\intercal}] &= \hat{\mathbb{E}}^k[(\partial_\theta \mathcal{L}({\theta}^0, D) + \delta(\hat{\theta}, D))(\partial_\theta \mathcal{L}(\theta^0, D)  + \delta(\hat{\theta}, D))^{\intercal}] \\
    & \leq   \hat{\mathbb{E}}^k[\partial_\theta \mathcal{L}({\theta}^0, D)\partial_\theta \mathcal{L}({\theta}^0, D)^{\intercal}]  +  \hat{\mathbb{E}}^k[\delta(\hat{\theta}, D)\delta(\hat{\theta}, D)^{\intercal}] \\
    & + 2 \sqrt{\hat{\mathbb{E}}^k[\partial_\theta \mathcal{L}({\theta}^0, D)\partial_\theta \mathcal{L}({\theta}^0, D)^{\intercal}] \cdot \hat{\mathbb{E}}^k[\delta(\hat{\theta}, D)\delta(\hat{\theta}, D)^{\intercal}] }
\end{align*}
where $\delta(\hat{\theta}, D) = \partial_\theta \mathcal{L}(\hat{\theta}, D)- \partial_\theta \mathcal{L}({\theta}^0, D)$. The inequality is from the Cauchy-Schwarz inequality. Similarly as in the proof of Lemma~\ref{lemma:ooderror} with Taylor expansion, we have 
\begin{align*}
    \hat{\mathbb{E}}^k[\delta(\hat{\theta}, D)\delta(\hat{\theta}, D)^{\intercal}] = o_P(1).
\end{align*}
and  we have $\hat{\mathbb{E}}^k[\partial_\theta \mathcal{L}({\theta}^0, D)\partial_\theta \mathcal{L}({\theta}^0, D)^{\intercal}] = {\mathbb{E}}^0[\partial_\theta \mathcal{L}({\theta}^0, D)\partial_\theta \mathcal{L}({\theta}^0, D)^{\intercal}] + o_P(1)$ from Thereom~\ref{thm:perturbation-model}.
Then, we get 
\begin{align*}
     \hat{\mathbb{E}}^k[\partial_\theta \mathcal{L}(\hat{\theta}, D)\partial_\theta \mathcal{L}(\hat{\theta}, D)^{\intercal}]  \leq   \hat{\mathbb{E}}^k[\partial_\theta \mathcal{L}({\theta}^0, D)\partial_\theta \mathcal{L}({\theta}^0, D)^{\intercal}]  + o_P(1).
\end{align*}
By using Cauchy-Schwarz inequality in other direction as well, we get
\begin{align*}
     \hat{\mathbb{E}}^k[\partial_\theta \mathcal{L}(\hat{\theta}, D)\partial_\theta \mathcal{L}(\hat{\theta}, D)^{\intercal}]  =   \hat{\mathbb{E}}^k[\partial_\theta \mathcal{L}({\theta}^0, D)\partial_\theta \mathcal{L}({\theta}^0, D)^{\intercal}]  + o_P(1).
\end{align*}
Similarly, we can derive
\begin{align*}
    \hat{\mathbb{E}}^k[\partial_\theta \mathcal{L}(\hat{\theta}, D)] &=  \hat{\mathbb{E}}^k[\partial_\theta \mathcal{L}({\theta}^0, D)]+ o_P(1).
\end{align*}
Then, the variance estimate on the pooled donor data can be written as
\begin{align*}
    \widehat{\text{Var}}_{\mathbb{P}^0}(\phi) &= \sum_{k=1}^K \frac{n_k}{n}\hat{\mathbb{E}}^k[\hat{\phi}(D) \hat{\phi(D)}^{\intercal}] - \left( \sum_{k=1}^K \frac{n_k}{n}\hat{\mathbb{E}}^k[\hat{\phi}(D)]\right) \left( \sum_{k=1}^K \frac{n_k}{n}\hat{\mathbb{E}}^k[\hat{\phi}(D)]\right)^{\intercal}\\&= \sum_{k=1}^K \frac{n_k}{n}\hat{\mathbb{E}}^k[{\phi}(D) {\phi(D)}^{\intercal}] - \left( \sum_{k=1}^K \frac{n_k}{n}\hat{\mathbb{E}}^k[{\phi}(D)]\right) \left( \sum_{k=1}^K \frac{n_k}{n}\hat{\mathbb{E}}^k[{\phi}(D)]\right)^{\intercal} + o_P(1).
\end{align*}
Then, by the proof in the Appendix, \ref{sec:remark-2}, we have $\widehat{\text{Var}}_{\mathbb{P}^0}(\phi(D))  = \text{Var}_{\mathbb{P}^0}(\phi(D)) + o_p(1)$. Finally, as $L \xrightarrow[]{} \infty$, we have the consistency of $\hat{\beta}$ and under the regularity conditions of Lemma~\ref{lemma:consistency}, we have the consistency of $\hat{\theta}$ by following the proof of Lemma~\ref{lemma:consistency}. This completes the proof.

\section{Additional Details in Data Analysis}\label{sec:add-details}

\subsection{Covariance of Test Functions in GTEx Data Analysis}

We estimate the sparse inverse covariance matrix of test functions $(\phi_{\ell})_{\ell = 1, \dots, L = 1000}$ defined in Section \ref{sec:gtex}, using group Lasso with a Python package \texttt{sklearn.covariance.GraphicalLassoCV}. The heatmap of fitted precision matrix is given in Figure~\ref{fig:sparse-covariance}. As the estimated inverse covariance matrix was close to a diagonal matrix, we did not perform a whitening transformation for the 1000 test functions. 

\begin{figure}[ht]
    \centering
    \includegraphics[scale = 0.6]{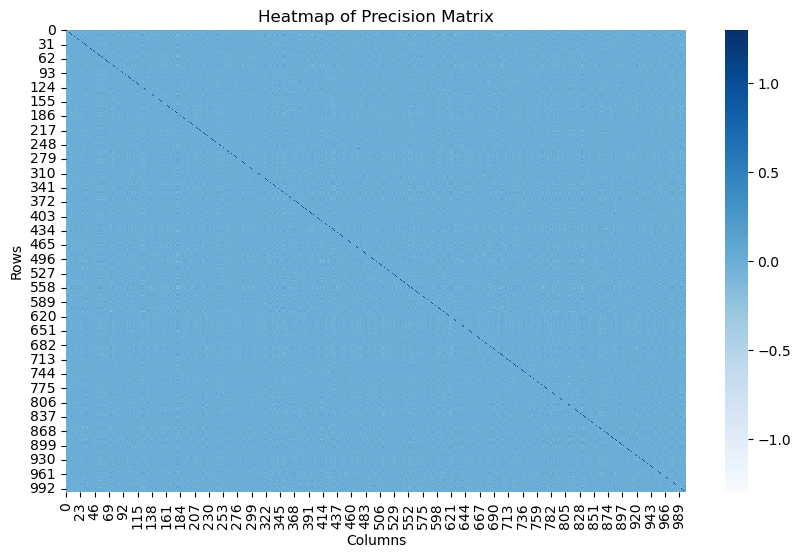}
    \caption{Fitted sparse precision matrix of test functions defined in Section \ref{sec:gtex} on the GTEx data}
    \label{fig:sparse-covariance}
\end{figure}

\subsection{Additional Results for ACS Income Data Analysis}

In this section, we provide additional results for ACS income data analysis in Section~\ref{sec:acs} for 6 randomly selected target states. The results are presented in Figure~\ref{fig:ACS-appendix}.

\begin{figure}
    \centering
    \includegraphics[scale = 0.5]{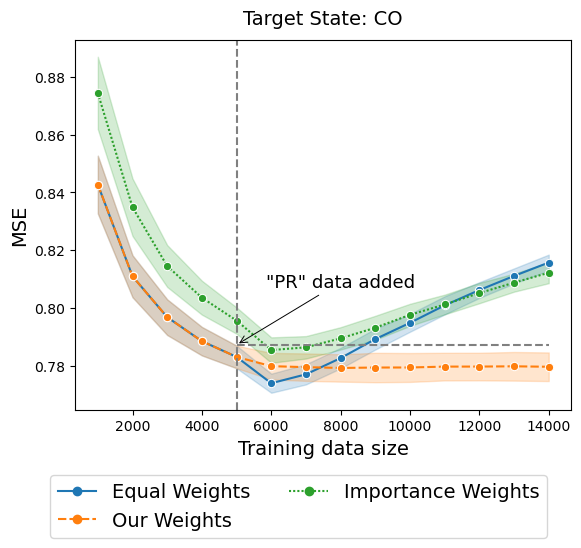}
    \includegraphics[scale = 0.5]{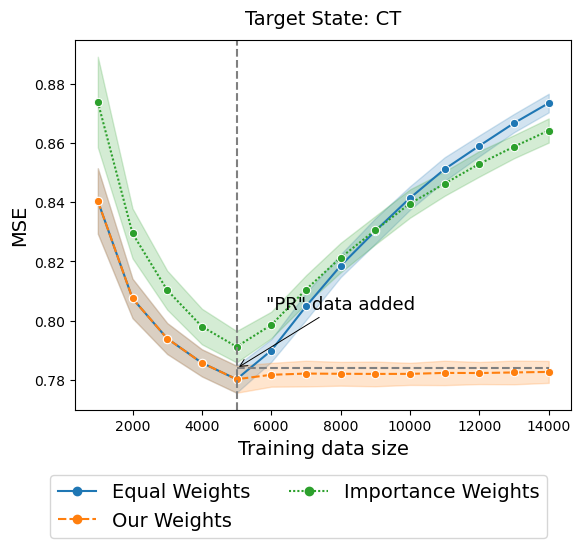}
    \includegraphics[scale = 0.5]{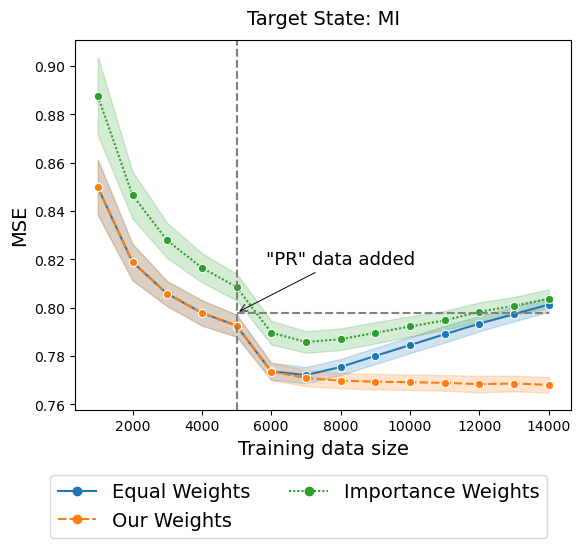}
    \includegraphics[scale = 0.5]{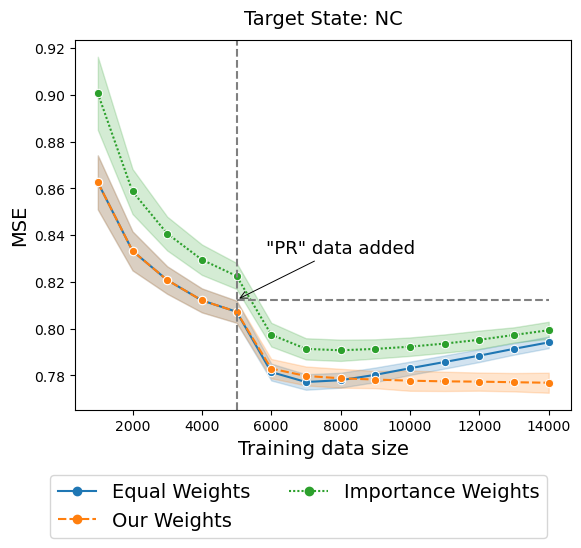}
    \includegraphics[scale = 0.5]{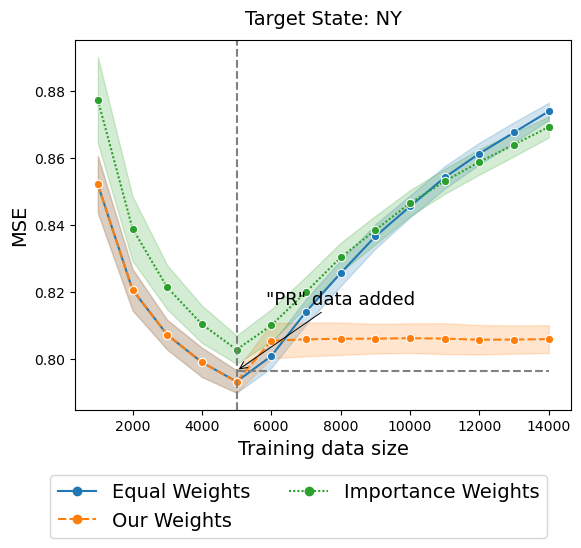}
    \includegraphics[scale = 0.5]{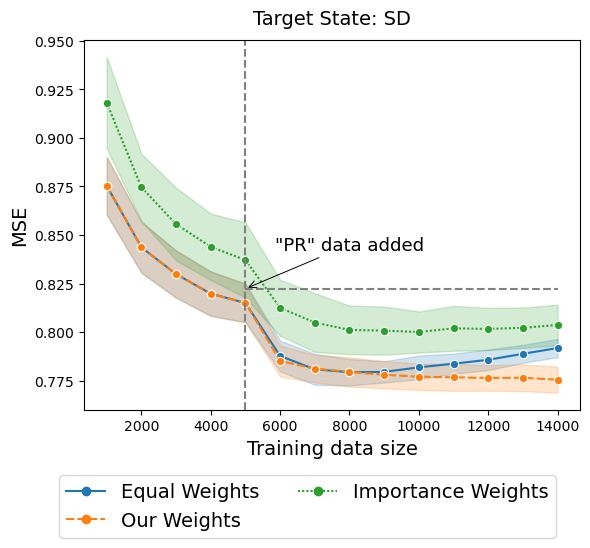}
    \caption{Test MSE results of the XGBoost when the training data is initially sourced from CA and then sourced from PR. The dashed vertical line indicates the point when PR data started to be added. The blue line is when samples are equally weighted, and the green line is when samples are weighted based on importance weights. The orange line is when samples receive distribution-specific weights using our proposed method.}
    \label{fig:ACS-appendix}
\end{figure}

\end{document}